\theoremstyle{definition}
\newtheorem{assumption}{Assumption}
\newtheorem{definition}{Definition}
\newtheorem{theorem}{Theorem}
\newtheorem{lemma}{Lemma}
\theoremstyle{plain}
\newtheorem{proposition}{Proposition}
\newcommand{\minimize}[1]{\displaystyle\minim_{#1}}
\newcommand{\minim}{\mathop{\hbox{\rm minimize}}}
\newcommand{\sbjt}{\mathrm{subject\ to}}
\DeclareMathOperator{\I}{\mathbb{I}}
\renewcommand{\vec}[1]{\boldsymbol{\mathrm{#1}}}
\newcommand{\bmat}[1]{\begin{bmatrix} #1 \end{bmatrix}}
\providecommand{\vg}{\ensuremath{\vec{g}}}
\newcommand{\overbar}[1]{\mkern1.5mu\overline{\mkern-3.0mu#1\mkern-1mu}\mkern 1.5mu}
\providecommand{\vg}{\ensuremath{\vec{g}}}
\DeclareFontFamily{OT1}{pzc}{}
\DeclareFontShape{OT1}{pzc}{m}{it}{ <-> s*[1.15] pzcmi7t }{}
\title{Decentralized Integration of Grid Edge Resources into Wholesale Electricity Markets via Mean-field Games}
\date{} 					
\author{ {\includegraphics[scale=0.06]{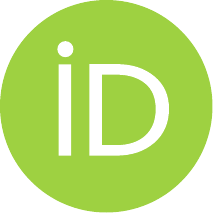}\hspace{1mm}Chen ~Feng}\\
	Edwardson School of Industrial Engineering\\
	Purdue University\\
	West Lafayette, IN 47907 \\
	\texttt{fc123good@gmail.com} \\
	\And
    {\includegraphics[scale=0.06]{orcid.pdf}\hspace{1mm}Andrew L.~Liu} \\
	Edwardson School of Industrial Engineering\\
	Purdue University\\
	West Lafayette, IN 47907 \\
	\texttt{andrewliu@purdue.edu} \\
}
\begin{document}
\maketitle
\begin{abstract}
	Grid edge resources refer to distributed energy resources (DERs) located on the consumer side of the electrical grid, controlled by consumers rather than utility companies. Integrating DERs with real-time electricity pricing can better align distributed supply with system demand, improving grid efficiency and reliability. 
However, DER owners, known as prosumers, often lack the expertise and resources to directly participate in wholesale energy markets, limiting their ability to fully realize the economic potential of their assets. Meanwhile, as DER adoption grows, the number of prosumers participating in the energy system is expected to increase significantly, creating additional challenges in coordination and market participation. 

To address these challenges, we propose a mean-field game framework that enables prosumers to autonomously learn optimal decision policies based on dynamic market prices and their variable solar generation. Our framework is designed to accommodate heterogeneous agents and demonstrates the existence of a mean-field equilibrium (MFE) in a wholesale energy market with many prosumers. Additionally, we introduce an algorithm that automates prosumers' resource control, facilitating real-time decision-making for energy storage management. Numerical experiments suggest that our approach converges towards an MFE and effectively reduces peak loads and price volatility, especially during periods of external demand or supply shocks. This study highlights the potential of a fully decentralized approach to integrating DERs into wholesale markets while improving market efficiency.
\end{abstract}




\keywords{solar\and energy storage\and DER integration\and mean field games\and mulitagent systems\and transactive energy\and demand response} 

\section{Introduction}

The growing adoption of distributed energy resources (DERs), such as rooftop solar panels and energy storage, presents significant opportunities to enhance grid efficiency and resilience. To fully leverage these resources, integrating them into wholesale energy markets is essential for enabling more flexible and reliable grid operations. However, traditional market structures impose high entry barriers for small-scale DER participation due to minimum size requirements and complex market rules. To address these challenges, the Federal Energy Regulatory Commission (FERC) issued Order 2222 \citep{FERC2222}, mandating that DERs be granted access to wholesale markets. While this regulatory change facilitates DER integration, effective mechanisms for small prosumers' (aka DER owners') participation remain unclear. A key concern is that prosumers often hesitate to relinquish direct control of their assets to aggregators -- entities that pool multiple small-scale DERs to meet market size thresholds. Existing literature primarily focuses on how aggregators bid into wholesale markets on behalf of their customers and how contracts for direct load control can be structured, largely overlooking decentralized alternatives that preserve prosumer autonomy.

Our work addresses this gap with three key contributions. First, we develop a fully decentralized framework that enables prosumers to optimize DER operations independently, guided by real-time locational marginal prices (LMPs). Second, we formulate this problem as a mean-field game, where prosumers operate under a mean-field assumption -- treating their individual bids as having negligible impact on LMPs. The collective bids of all prosumers, however, can influence market outcomes. Within this framework, we prove the existence of a mean-field equilibrium (MFE) for an infinite population of agents and an \(\epsilon\)-Markov-Nash equilibrium for a large but finite population. Third, we propose a scalable, low-overhead learning algorithm that allows prosumers to adapt their strategies based on LMP fluctuations, supporting real-time storage management and bid optimization without requiring centralized coordination. Numerical results demonstrate that our approach reduces price volatility and peak loads, even in the presence of supply or demand shocks, thereby improving market stability.

We want to emphasize that our framework is prescriptive rather than descriptive -- it is designed to prescribe optimal prosumer actions rather than merely describe observed behaviors. While our primary focus is electricity markets, the proposed mean-field approach is broadly applicable to any multiagent system where decisions are influenced by a shared external variable, such as market prices or other aggregate signals. Furthermore, the algorithm developed in this work is generalizable and extends beyond energy markets to settings where large-scale agent interactions shape market dynamics in different sectors.

The remainder of this paper is organized as follows. Section \ref{sec:lit} reviews relevant literature on models and algorithms for multiagent bidding in wholesale markets.  Section \ref{sec:Wholesale} outlines the wholesale electricity market model. Section \ref{sec:Prosumer} formulates the prosumer optimization problem. Section \ref{sec:MFG} integrates wholesale market dynamics and prosumer decision-making into a mean-field game framework and establishes the existence of a mean-field equilibrium. Section \ref{sec:algo} describes the proposed learning algorithm. Section \ref{sec:num} presents the numerical experiments and results. Finally, Section \ref{sec:concl} concludes with a summary and potential future research directions.

\section{Literature Review}
\label{sec:lit}
There is extensive literature on individual agents' strategies for bidding into wholesale markets, using either optimization-based or learning-based approaches. In contrast, we focus on systems involving multiple agents. Existing research in this area can be broadly divided into two categories: agent-based simulations and game-theoretic approaches. 
Agent-based simulation (ABS) is widely used to model bidding behaviors in wholesale energy markets, offering a natural approach for studying multi-agent systems. Reviews such as \cite{ABS_Review1, ABS_Review2, ABS_BookChapter} highlight its role in this field, with early works \cite{ABS_Price} and later studies \cite{ArgonneABS1, ArgonneABS2, shafie2014stochastic} advancing the method. A key aspect of ABS is defining appropriate behavioral models for each agent type, creating a heterogeneous artificial economy \citep{ABS_BookChapter}. While agents could be modeled as utility-maximizers considering other agents' actions -- akin to game theory -- ABS often avoids this complexity due to computational challenges.

An alternative, introduced by \cite{Roth1}, uses a simpler adaptive strategy based on action propensities, termed reinforcement learning (RL). Unlike modern RL (as presented in \cite{RLBook}), this model updates action probabilities based on past rewards without state-based feedback or value functions. Despite its simplicity, it effectively predicts human behavior in certain games \citep{Roth2}, inspiring adaptive multi-agent learning studies in energy markets, such as \cite{Bunn, Tesfatsion}. Similar adaptive methods \cite{Ilic, Rogers} explore bidding behavior and dynamic pricing responses but lack the ability to handle intertemporal decisions, such as energy storage management, which is central to this work.

With advancements in modern RL theories and algorithms, multi-agent reinforcement learning (MARL) offers sophisticated methods that avoid preset behavioral assumptions, relying only on utility maximization over time. Naturally, MARL has been applied to model agent participation in energy markets \cite{XueMARL, GoranMARL}. However, MARL still faces two significant challenges: a lack of theoretical guarantees -- specifically, whether multi-agent interactions will converge to an equilibrium, a steady state, or result in chaotic behavior -- in complex environments, and scalability issues, particularly in large systems involving thousands or more agents.

On the game theory side, there is a rich body of literature analyzing bidding strategies and market interactions. Nash-Cournot models, where agents act as quantity setters to maximize their own profits while accounting for market-clearing prices based on total quantities, are widely used in electricity market studies to analyze market power and strategic interactions among generators (for example, \cite{hobbs1986network,willems2002modeling,neuhoff2005network,metzler2003nash}). Another widely used framework is the supply function equilibrium (SFE), where agents compete by submitting supply functions instead of fixed quantities. SFE models are particularly suitable for wholesale electricity markets, as they capture the price-quantity relationship under market clearing (see, for example, \cite{baldick2004theory,rudkevich2005supply,anderson2002using,anderson2005supply,holmberg2010supply}). While these models provide valuable insights, they are inherently static and fail to capture the intertemporal dynamics that are critical in energy systems with energy storage.

Dynamic game-theoretic models address some limitations of static frameworks by incorporating intertemporal decision-making, allowing for the analysis of strategic behaviors over time. These models have been extensively studied in the economics and game-theory literature and applied to examine the strategic behaviors of electricity market participants. For example, works such as \cite{Liu_Collusion2, Liu_Collusion3, fabra2005price, anderson2011implicit} investigate repeated interactions among power producers, focusing on equilibrium concepts of subgame perfect equilibrium. However, these models typically assume complete and perfect information, meaning that all participants have full knowledge of each other's payoff functions, strategies, and the entire history of the game.
In this work, however, the games involve incomplete information, as consumers and prosumers may lack precise knowledge of others' payoff functions, be unable to observe their actions, or not have access to the full history of the game. The standard equilibrium concept for such dynamic games is the Perfect Bayesian Nash Equilibrium (PBNE) (see \cite{GameTheory}). PBNE requires players to update their beliefs using Bayes' rule and to select strategies that maximize their expected payoffs across all possible game histories. While theoretically appealing, PBNE is impractical for real-world applications, as it assumes that agents possess an unrealistic level of strategic sophistication. Moreover, the computational complexity of these models grows significantly as the number of agents increases, making them difficult to scale to large systems.

Mean-field game (MFG) theory\footnote{We focus here on discrete-time MFGs.} offers a promising solution to the challenges posed by dynamic games with incomplete information by approximating interactions among a large number of agents through an aggregate mean-field effect. 
MFGs provide several advantages over PBNE, particularly in terms of computational tractability and scalability. Extensive research has explored the existence and uniqueness of mean-field equilibria (MFE) (\cite{adlakha2013mean,light2022mean, MarkovNash}). 
In addition to theoretical advancements, provably convergent algorithms for computing MFE have been developed, including \cite{guo2019learning, gu2024mean, Yang_LearnPlayMFG}.

Building on the theoretical foundations, MFGs have been applied across various domains, including energy markets, where decentralized decision-making and large populations of interacting agents play a critical role. Notable applications include electricity demand management \citep{MFG_DM} and electric vehicle (EV) charging coordination \citep{MFG_EV1, MFG_EV2}. MFGs have also been used to study electricity price dynamics, a topic closely aligned with the focus of this paper. However, all these energy-related MFG applications adopt continuous-time models, which can be impractical for electricity markets where pricing and market clearing occur at discrete intervals (unlike financial markets). Additionally, current models often overlook interactions between energy system operators and the influence of transmission constraints, which are essential factors in determining electricity prices.  we applied discrete-time MFGs to analyze how DERs' decentralized actions influence wholesale electricity markets in \cite{HeLiu24}, though without theoretical foundations.  In this paper, we expand on this by rigorously studying how the collective and decentralized actions of prosumers with solar PVs and energy storage affect wholesale electricity prices. We establish formal conditions for the existence of MFE and propose a scalable heuristic algorithm, making it well-suited for large-scale energy systems integrating increasing levels of decentralized resources. A major advantage of our algorithm is its minimal computational and memory requirements for each agent, unlike distributed optimization methods such as the alternating direction method of multipliers (ADMM), which require solving (proximal) optimization problems at each step. This makes our approach highly scalable and well-suited for large systems with thousands of DERs. By facilitating control automation with low overhead, our algorithm helps unlock tangible benefits for DER owners, supports DER integration into wholesale markets, and enhances scalability. 

Our work addresses critical gaps in the study of decentralized decision-making in energy. A defining feature of our model is the use of continuous state and action spaces (such as energy storage charging/discharging), providing a more realistic representation of prosumer behavior compared to discretized models. We further extend the framework to include multiple heterogeneous agent types, capturing the diversity in prosumer characteristics and decision-making. This aspect draws inspiration from \cite{mondal2022approximation}, which incorporates heterogeneity in a mean-field control (MFC) setting. However, while their work focuses on cooperative agents, our model explores non-cooperative interactions. This market-driven approach extends beyond energy markets to any domain where aggregate behavior shapes price signals. Leveraging this structure, the highly scalable heuristic algorithm developed in this work can be applied across a wide range of settings.


\section{Wholesale Energy Market Operations}
\label{sec:Wholesale}
\begin{figure*}[!ht]
	\begin{center}
		\includegraphics[scale=0.11]{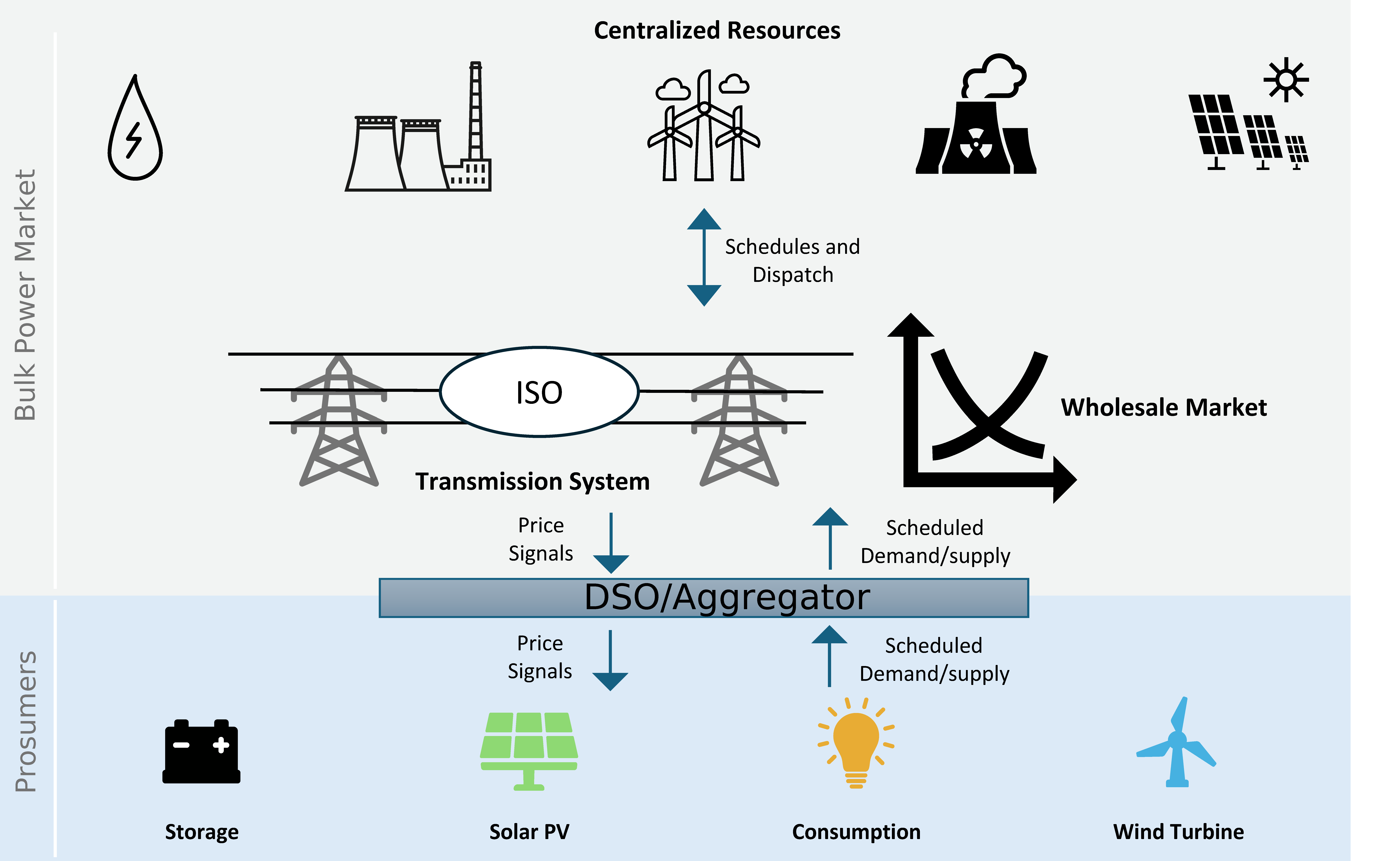}
		\caption{Conceptual framework of a wholesale energy market with aggregators participation}
		\label{fig:MktOper}
	\end{center}
\end{figure*}
In this section, we first present the optimization problem solved by an Independent System Operator (ISO) for a wholesale energy market. As illustrated in Fig. \ref{fig:MktOper}, in each time period (such as an hour), the ISO collects supply and demand bids and runs optimization problems to match the supply and demand with the lowest cost, subject to various engineering and transmission network constraints. The marginal costs of supplying one additional unit of demand at each node, known as the locational marginal prices or LMPs, can be calculated based on the shadow prices of constraints on supply-demand balancing and transmission capacity constraints. We show a key result in this section regarding the Liptschitz continuity of the LMPs with respect to energy demand.   


Consider a (bidirectional) power system network with $N$ nodes and $L$ transmission lines. For simplicity, we assume that each node $n \in \{1,...,N\}$ has only one energy supplier\footnote{If there are multiple suppliers, we can assume that each supplier is on a separate node with such nodes connected by transmission lines of unlimited capacities.} with an operation cost function $C_n(\cdot)$. The node has $I_n$ agents, including both consumers and prosumers. 
One hour before the operating period \( t \), the \( i \)-th agent at node \( n \) submits its energy demand/supply bid, \( b^{n}_{i,t} \), for period \( t \) to the ISO.
 Note that $b^{n}_{i,t}$ represents the net demand. For consumers, this value is simply their actual energy demand with $b^{n}_{i,t} > 0$. For prosumers, \( b^{n}_{i,t} \) represents net energy demand, calculated as actual demand plus the energy used for charging batteries, minus PV generation and any energy withdrawn from batteries. This net demand can be either positive or negative, where a negative value indicates that the prosumer is supplying energy back to the grid. The decision-making problem for prosumers' bidding is presented in the next section.

Throughout the paper, we make the blanket assumption that the total net demand of all agents in the entire system is positive, that is, 
$\sum_{n=1}^N\sum_{i=1}^{I_n} b^{n}_{i,t} > 0,\quad \forall \ t \in \{1,...,\}.
$
This assumption is reasonable, as it will likely take considerable time in the future before prosumers can meet all consumer energy demands and still produce surplus energy.

In the hour-ahead market, the ISO solves an optimization problem, known as economic dispatch (ED), for the upcoming operating period $t$. This optimization determines the amount of real power to be dispatched from each electric power-generating resource to match the total demand as follows:
\begin{align}
\minimize{\vg_t} &\ \sum_{n=1}^{N} C_n(g^n_{t}) \label{obj1}\\
\sbjt &\ \sum_{n=1}^{N} g^n_{t} \geq \sum_{n=1}^{N}\sum_{i=1}^{I_n} b^{n}_{i,t}  \label{const1}\\ 
& \ -\widehat{\mathrm{F}}_l \leq \sum_{n=1}^{N} \mathrm{PTDF}_{l, n} (g^n_{t}- \sum_{i=1}^{I_n} b^{n}_{i,t})\leq \widehat{\mathrm{F}}_l, \ \forall l \in \{1,...,L\} \label{const2}\\
&\ \ 0 \leq g^n_{t} \leq \widehat{\mathrm{G}}_n, \quad \forall n \in \{1,...,N\}, \label{const3}
\end{align}
where $\vg_t:= \{g^n_{t}\}_{n=1}^N$ is the collection of decision variables, representing the energy generation at node $n$ in time period $t$. 
Constraint \eqref{const1} specifies that the total supply must not be less than the total demand, often referred to as the supply and demand balancing constraint. 
Constraint \eqref{const2} represents the transmission network capacity constraints, with the capacity limit denoted by $\widehat{\mathrm{F}}_l$.  The network, which is assumed to be a connected graph, is modeled as a hub-spoke network, in which energy sent from node $n$ to $n'$ is assumed to be routed from $n$ to a hub (an arbitrary node in the system) first and from the hub to $n'$. The parameter $\mathrm{PTDF}_{l,n}$ in \eqref{const2} represents the power transfer distribution factor, which indicates the fraction of power injected at node $n$ that flows through line $l$.\footnote{For simplicity, we ignore transmission losses in this formulation. However, they can be incorporated as long as the resulting formulation remains a convex optimization problem. In that case, all results presented in this work still hold.
}  Last, $\widehat{\mathrm{G}}_n$ in \eqref{const3} represents the generation capacity for the power plant at node $n$. 

To write out the exact formula of nodal electricity prices, aka the LMPs, we first  use $\mathcal{L}$ to denote the Lagrangian function of the ED problem. For the ease of argument, we use $B^t_h$ to denote the aggregate demand at node $n$ in time period $h$; that is, $B^n_t =  \sum_{i=1}^{I_n}b^n_{i,t}$. 
Let $\lambda$ denote the dual variable associated with constraint \eqref{const1}, and $\overline{\mu}_l$ and $\underline{\mu}_l$ be the dual variables corresponding to  \eqref{const2}. Then the LMPs, denoted by $P^n_t(B^1_{t},\dots, B^{N}_{t})$ for $n = 1, \ldots, {N}$ at time $t$,  are the derivatives of the Lagrangian function with respect to the demand: 
\begin{align}
\label{LMP} 
\mathrm{LMP}^n_{t} & :=  P^n(B^1_{t},\dots,B^{N}_{t}) = \frac{\partial \mathcal{L}}{\partial B^n_{t}} 
= \lambda - \sum_{l=1}^L \mathrm{PTDF}_{l, n} (\overline{\mu}_l - \underline{\mu}_l).
\end{align} 
To establish the main result of this paper, which is the existence of an MFE of the multiagent system, it is crucial to prove that the LMPs are Lipschitz continuous with respect to the demand vector $\mathbf{B}_t := (B^1_t, \dots, B^{N}_t)$. Achieving this requires an assumption regarding the constraint qualification for the ED problem. We state this assumption below and then present the Lipschitz continuity result.
\begin{assumption}
	\label{assump:LICQ} 
	(LICQ) Let \(X(\mathbf{B}_t)\) denote the feasible region of the ED problem \eqref{obj1} -- \eqref{const3}. Define the set \(\mathcal{F}_B\) such that for all \(\mathbf{B}_t \in \mathcal{F}_B\), \(X(\mathbf{B}_t) \neq \emptyset\). We assume that for all \(t\) and for all \(\mathbf{B}_t \in \mathcal{F}_B\), the linear independence constraint qualification (LICQ) holds at all points in \(X(\mathbf{B}_t)\).
	
\end{assumption}
\begin{proposition} \label{prop:LMP_LipCont}
	Assume that the generation cost function $C_n(\cdot)$ in \eqref{obj1} is a strongly convex quadratic function in the form of $C_n(g) = \frac{1}{2}\alpha_n g^2 + \beta_n g + \gamma_n$, with $\alpha_n > 0$  for all $n = 1, \ldots, N$. 
	Under Assumption \ref{assump:LICQ}, with $\mathbf{B}_t\in\mathcal{F}_B$, the LMP at each node $n =1, \ldots, N$, $P^n(\mathbf{B}_t)$, is a single-valued function and Lipschitz continuous with respect to $\mathbf{B}_t$.
\end{proposition}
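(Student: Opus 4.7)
The plan is to view the ED problem \eqref{obj1}--\eqref{const3} as a strongly convex parametric quadratic program whose parameter $\mathbf{B}_t$ enters affinely into the right-hand side of every constraint, and then combine standard sensitivity theory for parametric QPs with the multiplier formula \eqref{LMP}. Under the quadratic form of $C_n$, the objective becomes $\tfrac{1}{2}\vg_t^\top \mathbf{Q}\,\vg_t + \boldsymbol{\beta}^\top \vg_t + \mathrm{const}$ with $\mathbf{Q} = \mathrm{diag}(\alpha_1,\ldots,\alpha_N) \succ 0$. Constraint \eqref{const1} has right-hand side $\sum_n B^n_t$, and \eqref{const2} can be rewritten with right-hand sides $\pm\widehat{\mathrm{F}}_l + \sum_n \mathrm{PTDF}_{l,n}\,B^n_t$; both are affine in $\mathbf{B}_t$, while \eqref{const3} does not depend on $\mathbf{B}_t$.

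First I would establish that the LMP is single-valued. Strong convexity of the objective implies that for every $\mathbf{B}_t \in \mathcal{F}_B$ the primal optimum $\vg_t^\star(\mathbf{B}_t)$ is unique. At $\vg_t^\star$, Assumption~\ref{assump:LICQ} makes the KKT stationarity condition a non-singular linear system in the multipliers associated with the active constraints (the inactive multipliers vanish by complementary slackness), so the dual optimum $(\lambda^\star, \overline{\boldsymbol\mu}^\star, \underline{\boldsymbol\mu}^\star, \overline{\boldsymbol\nu}^\star, \underline{\boldsymbol\nu}^\star)$ is also unique. Consequently $P^n(\mathbf{B}_t)$ defined in \eqref{LMP} is a well-defined, single-valued function of $\mathbf{B}_t$.

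For the Lipschitz claim, I would invoke the classical theory of multi-parametric convex QPs: for a strongly convex QP whose constraint right-hand sides are affine in a parameter, the effective domain $\mathcal{F}_B$ admits a partition into finitely many polyhedral critical regions on each of which the full primal-dual KKT solution is an affine function of $\mathbf{B}_t$; strong convexity together with LICQ ensures these affine pieces agree continuously across region boundaries. A continuous piecewise-affine map on a polyhedral domain is globally Lipschitz, with constant equal to the maximum Jacobian norm across the finitely many pieces. Thus $\lambda^\star(\mathbf{B}_t)$, $\overline{\boldsymbol\mu}^\star(\mathbf{B}_t)$, and $\underline{\boldsymbol\mu}^\star(\mathbf{B}_t)$ are each Lipschitz in $\mathbf{B}_t$, and because \eqref{LMP} expresses $P^n$ as a fixed linear combination of these multipliers, $P^n(\cdot)$ inherits Lipschitz continuity on $\mathcal{F}_B$.

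The step I expect to be the main obstacle is stitching the locally affine representations together at critical-region boundaries, where the active set changes and the multiplier vector is only defined piecewise. A cleaner route that sidesteps the combinatorial case analysis is Robinson's strong regularity theorem: LICQ combined with the second-order sufficient condition (which is automatic from $\mathbf{Q} \succ 0$) makes the KKT system, viewed as a generalized equation, strongly regular, yielding locally Lipschitz primal-dual dependence on canonical right-hand-side perturbations. Because $\mathcal{F}_B$ is covered by finitely many critical regions with a common dual-space domain, patching these local Lipschitz estimates yields a single global Lipschitz constant for $P^n(\cdot)$, completing the argument.
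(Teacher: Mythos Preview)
Your proposal is correct but follows a different route from the paper. The paper reformulates the KKT system of the ED problem as a linear complementarity problem $0 \le \mathbf{x} \perp \mathbf{u}(\mathbf{B}_t) + M\mathbf{x} \ge 0$, observes that $\mathbf{u}(\mathbf{B}_t)$ is affine in $\mathbf{B}_t$, and then invokes a result of Mangasarian and Shiau on Lipschitz continuity of uniquely solvable LCPs (uniqueness coming from strong convexity plus LICQ). This delivers a \emph{global} Lipschitz bound on the full primal--dual vector in a single stroke, with no need to enumerate active sets or stitch local pieces together. Your parametric-QP / Robinson-strong-regularity argument reaches the same conclusion and is arguably more familiar to readers coming from control or sensitivity analysis; the piecewise-affine structure it exposes is also informative. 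The trade-off is exactly the one you flagged: Robinson's theorem is inherently local, so you must appeal to the finiteness of critical regions (and continuity across their boundaries) to upgrade to a global constant, whereas the LCP route bypasses that bookkeeping entirely. Both arguments rely on the same two ingredients---strong convexity for primal uniqueness and LICQ for dual uniqueness---so neither is more general in the present setting; they simply package the perturbation analysis differently.
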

The proof is in Online Appendix \ref{Proof_LipCont}.

\section{A Prosumer's Markov Decision Process}
\label{sec:Prosumer}
The previous section focuses on the system operator's optimization problem. In this section, we shift the focus to how individual agents participate in a wholesale market. We first introduce a model for a single agent who makes {\it repeated} decisions regarding the charging and discharging of their energy storage over time, in response to real-time pricing tied to the LMPs.  The agents make their decisions under the assumption that the system is in an MFE due to the large number of agents. We then show that an MFE can indeed emerge with heterogeneous agents holding this belief. This is a direct extension of our earlier work in \cite{ZhaoLiu_MAB} where each agent solves a multiarmed bandit problem, which cannot accommodate intertemporal decisions. 

\subsection{Assumptions on the Agents}
\label{subsec:Assump}
To accommodate agents' heterogeneity, we assume that each consumer or prosumer has a type $\theta \in \Theta$, with $\Theta$ being a finite set. These types can include characteristics such as location (e.g., agents at different nodes in the transmission network belong to different types), varying solar PV capacities, battery capacities or types, and distinct load profiles. We assume that agents of the same type are homogeneous in their payoff function, state transition function, battery capacity, PV generation profile, and load distribution. Specifically, 
each agent of type $\theta$ has a battery capacity $\overline{e}^{\theta}=\frac{\overline{C}^{\theta}}{I^{\theta}}$, where $I^{\theta}$ is the number of agents of type $\theta$, and $\overline{C}^{\theta}$ is the aggregated battery capacity of all type-$\theta$ agents. This definition ensures that as $I^{\theta}$ approaches infinity, each individual’s capacity becomes infinitesimally small, yet the aggregate capacity remains well-defined and finite.

\subsection{Single-agent's Dynamic Optimization}
\label{subsec:SDP}
In the following, we provide the key elements in building an individual agent's decision-making model, with a given agent type $\theta \in \Theta$.

\textit{Action.} At each time period \( t \), agent \( i \) determines the fraction of energy to charge or discharge from their battery, expressed as a percentage of battery capacity and denoted by \( a_{i,t} \in \mathcal{A} := [-1, 1] \).
 A positive value of $a_{i,t}$ signifies a charging action, whereas a negative value indicates discharging. 

\textit{State.} The state of an agent consists of three elements: the net load, the state of charge (SoC) of the energy storage, and time of day. The net load, which is a random variable, is defined as the firm (or inflexible) demand minus the energy output from solar PVs. We assume that agents of the same type share an identical daily net load shape, representing the expected value of the net load at the corresponding time of the day. Let \( Q^{\theta}_{i,t} \) denote the net load for agent \( i \) at time period \( t \), where \( Q \) is used to represent `quantity.'
 Since actions (and later, the SoC) are defined as percentages, it is convenient to consider net load as a percentage as well. We introduce the ratio \(q^{\theta}_{i,t} := Q^{\theta}_{i,t}/{\bar{e}^{\theta}}\), where \(\bar{e}^{\theta}\) is the storage capacity as defined in Section \ref{subsec:Assump}.  The transition from $q^{\theta}_{i,t}$ to $t+1$ is assumed to be purely driven by weather conditions and by random noise, which accounts for variations in real-time electricity usage among agents. Mathematically speaking, we have that 
\begin{equation}\label{eq:q_decompose}
q^{\theta}_{i,t} = \omega^{\theta}_t + \zeta^{\theta}_{i,t}, 
\end{equation} where both \(\omega^{\theta}_t\) and \(\zeta^{\theta}_{i,t}\) are random variables. The first term, \(\omega^{\theta}_t\), represents weather-related randomness and is location-specific (depending on the type \(\theta\)) but not agent-specific (hence, no agent index \(i\)).
The second term, \(\zeta^{\theta}_{i,t}\), represents agent-specific random noise in electricity demand. Both variables are assumed to have compact supports, as each agent's electricity demand and PV/storage capacity are finite. 

For the SoC of energy storage, we use $e_{i,t} \in \mathcal{E} \equiv [0, 1]$ to denote the fraction of remaining energy in the battery at the beginning of period $t$ for agent $i$. The state transition of the SoC, denoted by $E(\cdot,\cdot)$, can be expressed as:
\begin{align}
e_{i,t} & :=  E(e_{i,t-1},a_{i,t-1})  = 
\max \{\min \{e_{i,t-1} + a_{i,t-1},1\}, 0\}, \ t = 1, 2, \ldots, \label{eq:StateTransition}
\end{align}
with the `max' and `min' operations to ensure that the actions will not lead to over-charging or over-discharging of the battery. In the following, we use $\mathcal{E} = [-1,1]$ to denote the general space of SoC. 

For the time of day, $h = 1, \ldots, H$, we account for the fact that an agent's policy should vary throughout the day. For example, even if the state of charge of the energy storage is the same at 10 AM and 6 PM in a day, the corresponding optimal strategy should be different. At 10 AM, solar energy is generally abundant, and household electricity usage is typically lower, as many people are at work. In this case, a good strategy might be to charge the battery to full. On the other hand, at 6 PM, solar energy is diminishing, and people are returning home, causing residential energy demand to increase. Even with the decrease in commercial and industrial loads at that time, the overall energy demand is expected to rise in the early evening. Hence, a good strategy at this time might be to discharge the energy storage. For a general time period index $t$, we use $T_{day}(t)$ to denote the time of day of $t$, and denote the set of times of day as $\mathcal{H}$, where $\mathcal{H} = \{1, \ldots, H\}$.

In the following, we treat \( e_{i,t} \) and \( T_{day}(t) \) as the state variable, denoted by the generic notation \( s_{i,t} \), while considering the random variable \( q_{i,t}^{\theta} \) as exogenous. Note that the state of charge transition equation \eqref{eq:StateTransition} does not involve any uncertainties; that is, the net load \( q_{i,t}^{\theta} \) does not directly affect the state transition. This is a specific modeling choice, which we will explain further after introducing the agents' bid functions in \eqref{eq:bids}. This deterministic transition simplifies both the analysis and algorithm design in the subsequent sections. Additionally, the transition of the time of day is trivially deterministic. We use \( Tr(s, a) \) to denote the general state transition, which includes both the state of charge transition and the time of day transition, which simply increments by one (that is, moves to the next time of day). It is understood that when \( T_{day}(t) = H \), the time of day cycles back to 1 in $t+1$, representing the start of the next day.


\textit{Charging/Discharging Efficiency.} For most energy storage batteries, both charging and discharging efficiencies decrease as the respective rates increase. As demonstrated in Figure \ref{fig:efficiency} (taken from \citep{amoroso2012advantages}), the efficiency of a lithium-ion battery approximately follows a linear relationship with the charging rate.
\begin{figure}[htb]
	\centering
	\includegraphics[scale=0.3]{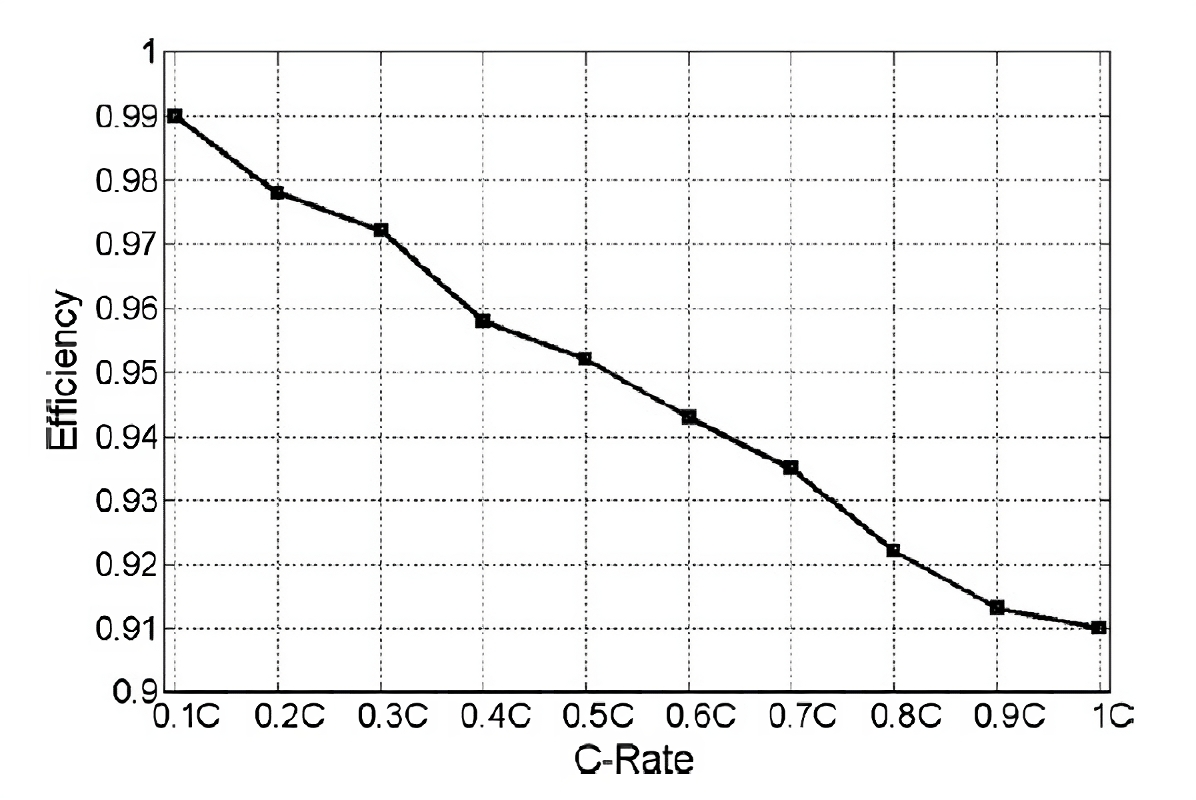}
	\caption{Experimental results showing the dependency of charging efficiency on the charging rate for a Li-ion cell ($C$ represents battery capacity)\cite{amoroso2012advantages}}  \label{fig:efficiency}
\end{figure}
We model the energy charged to or discharged from the battery at a constant rate of $\frac{a_{i,t}}{\Delta t}$ for each period, where $a_{i,t}$ is agent $i$'s charging/discharging action as defined earlier, and $\Delta t$ represents the duration of the period. Consequently, we assume that charging efficiency decreases linearly as \( a_{i,t} \) increases for \( a_{i,t} \in [0,1] \), while discharging efficiency increases linearly as \( a_{i,t} \) decreases for \( a_{i,t} \in [-1,0] \).
 The efficiency function is defined as:
\begin{align}
\eta(a_{i,t}) =\begin{cases}
\alpha_0 + \alpha_d \cdot a_{i,t}, & \text{if} \ a_{i,t} < 0, \\
\alpha_0 - \alpha_c \cdot a_{i,t}, & \text{if} \ a_{i,t} \geq 0, 
\end{cases}
\label{eq:eff}
\end{align}
where $\alpha_0 \in (0,1)$ represents the baseline charging/discharging efficiency. The coefficients $\alpha_c > 0$ and $\alpha_d > 0$ represent the rates at which efficiency decreases with increasing charging and discharging percentages, respectively. To ensure that efficiencies across all $a_{i,t} \in [-1,1]$ are non-negative, we impose the conditions that $\alpha_0 - \alpha_c > 0$ and $\alpha_0 - \alpha_d > 0$.

\textit{Energy bid.} With solar panels and energy storage, a prosumer's bid, \(b^{\theta}_{i,t}\), can be represented as a function of the state variable $s_{i,t}$, action $a_{i,t}$, and the exogenous uncertainty $q^{\theta}_{i,t}$:
\begin{align}\label{eq:bids}
& b^{\theta}_{i,t}(s_{i,t}, a_{i,t}, q^{\theta}_{i,t}) =  \left\{
\begin{aligned}
& q^{\theta}_{i,t}\cdot \overline{e}^{\theta} + \eta(a_{i,t})\overline{e}^{\theta} \cdot \max \big\{-e_{i,t},\ a_{i,t} \big\}, \ \text{if  } a_{i,t} < 0\ \mathrm{(discharging)},\\[8pt]
&  q^{\theta}_{i,t} \cdot \overline{e}^{\theta} +  \frac{\overline{e}^{\theta}\cdot \min \big\{1-e_{i,t},\ a_{i,t}\big\}}{\eta(a_{i,t})}, \ \text{if} \ a_{i,t} \geq 0\ \mathrm{(charging)}. 
\end{aligned}
\right.
\end{align}
The bids represent the sum of the net load and battery charging or discharging quantity (after accounting for efficiency losses). Since the state variables, action variables, and exogenous uncertainties -- $e_{i,t}$, $a_{i,t}$, and $q^{\theta}_{i,t}$ -- are all bounded, the bid $b_{i,t}$ is also bounded for all $i$ and $t$.

The formulation in \eqref{eq:bids} defines the bidding strategy. The first case (\( a_{i,t} < 0 \), discharging) indicates that the agent first uses its energy storage to meet its net energy demand, \( q^{\theta}_{i,t} \cdot \overline{e}^{\theta} \), measured in absolute terms rather than as a percentage. If there is excess energy after discharging, it is sold directly into the wholesale market. Conversely, if there is a shortfall, the agent purchases the required energy from the wholesale market.  
The second case (\( a_{i,t} \geq 0 \), charging) states that the bid represents the total energy purchased from the grid to meet the agent's energy demand plus the energy charged to storage. This bid formulation makes the state transition in \eqref{eq:StateTransition} deterministic and simplifies the analysis considerably. While this is not the only way to design a bidding strategy, it has the advantage of giving prosumers precise control over the energy levels they wish to maintain in their storage.

The downside of this approach is that it assumes any excess supply or demand from prosumers can always be absorbed by or met in the wholesale market. This assumption holds when the collective size of prosumers is small relative to the overall grid’s supply and demand, or when considering the geographical averaging effect -- where excess energy from prosumers in one area can be used to meet the needs of another. However, as the number of prosumers increases, this assumption may become problematic, especially since most prosumers have solar generation, not wind. Unlike wind energy, which benefits from geographical diversity due to varying wind conditions, solar power generally does not. During the day, solar energy is generated across all locations (with some variation due to irradiance), but in the evening, production drops to zero. A more sophisticated bidding strategy using reinforcement learning can be explored as a future research direction.

\textit{Population Profile.} Before detailing the payoff functions for each agent, it is essential to establish the concept of a population profile, which aggregates the states and actions of all agents. In a large population game, although the actions of an individual agent do not directly affect the payoffs of others, the aggregated actions of the entire group do. 
Population profiles vary by both time of day and agent type. We begin by defining the empirical distribution of population profiles for a finite number of agents at time  $t$. To account for general state and action spaces, we use $\mathcal{B}(X)$ to denote the Borel $\sigma$-algebra of a generic set $X$. Then, for a state space \( S \in \mathcal{B}(\mathcal{S}) \), defined as the Cartesian product of \( \mathcal{E} \) and \( \mathcal{H} \) -- the SoC space and the set of all times of the day -- and an action space \( A \in \mathcal{B}(\mathcal{A}) \), 
we define the following:
\begin{align}\label{eq:EmpiricalPopProfile}
& p^{I^{\theta}}_{t}(S,A) = \frac{1}{I^{\theta}}\sum_{i=1}^{I^{\theta}} \sum_{h\in\mathcal{H}}\mathbb{I}_{\{e_{i,t}\in \mathcal{E}\}}\times \mathbb{I}_{\{T_{day}(t)=h\}}\times \mathbb{I}_{\{a_{i,t} \in A\}},
\end{align}
where $\mathbb{I}_{\{e_{i,t}\in S\}}$, $\mathbb{I}_{\{T_{day}(t)=h\}}$, and $\mathbb{I}_{\{a_{i,t} \in A\}}$ are indicator functions that respectively track the state of charge, time of day, and action of agent $i$. This formulation represents the empirical joint distribution of states and actions across the population. 

Let $p^{\infty,\theta}_{h}$ be the limit as $I^{\theta}, t \rightarrow \infty$ for all $\theta$ and $h \in \mathcal{H}$. This limit represents a probability distribution over the joint state and action space, denoted by $\Xi := \mathcal{S} \times \mathcal{A}$. We use $\mathcal{P}(\Xi)$ to denote the set of all probability measures on $\Xi$, and let $p^{\infty}_{h} := [p^{\infty,\theta}_{h}]_{\theta \in \Theta} \in \mathcal{P}(\Xi)^{|\Theta|}$ denote the population profile of all types at time $h$ of a day, with $|\Theta|$ being the cardinality of the type space $\Theta$. Furthermore, we use $p^{\infty}$ to denote the collection of $p^{\infty}_{h}$ for all $h \in \{1, \ldots, H\}$; that is, $p^{\infty} := [p^{\infty}_{h}]_{h=1}^{H} \in \mathcal{P}(\Xi)^{|\Theta|\times H}$.

\textit{Payoff.} The single-stage payoff function for a type-\(\theta\) agent at time \(t\), with a long-run equilibrium of the population profile $p^{\infty}_{_{T_{day}(t)}}$, is denoted as \(R_{i,t}^{\theta}(s_{i,t}, a_{i,t}, q^{\theta}_{i,t} \mid p^{\infty}_{_{T_{day}(t)}}): \mathcal{S} \times \mathcal{A} \times \mathcal{Q} \rightarrow \mathbb{R}\).
To provide the explicit mathematical formulation of the payoff function, we first define $P^{n(\theta)}_t(\cdot): \mathcal{P}(\Xi)^{|\Theta|} \to \mathcal{R}$ as a function that maps the population profile at time $t$ to the LMP at node $n$ in the transmission network. With a slight abuse of notation, we use $n(\theta)$ to denote the location within the transmission network where agents of type $\theta$ are situated. 
The stage payoff function is:
\begin{align} 
& R^{\theta}_{i,t}(s_{i,t}, a_{i,t}, q^{\theta}_{i,t}|p^{\infty}_{_{T_{day}(t)}}) =  - P^{n(\theta)}_t(p^{\infty}_{_{T_{day}(t)}}) \times b^{\theta}_{i,t}(e_{i,t}, a_{i,t}, q^{\theta}_{i,t}), \label{eq:StagePayff} 
\end{align}
where $b^{\theta}_{i,t}$ is agent $i$'s energy bid at time $t$, as specified in \eqref{eq:bids}. Since $b_{i,t} < 0$ indicates energy sales to the grid, this formula yields a positive payoff for the agent, while an energy purchase bid ($b_{i,t} > 0$) results in a cost, or a negative payoff, to the agent. 

Note that the stage payoff is a random variable due to the stochastic nature of the LMPs, demand, and variable renewable outputs. When determining optimal policies, agents must rely on the expected value of the payoff. Therefore, to simplify the notation and analysis, we directly define the expected payoff and denote it by \( \overbar{R}^{\theta}_t(s, a \mid p^{\infty}) \). Since each individual agent's bid is small (infinitesimal in the case of an infinite number of agents), we assume that the individual bid does not impact the LMPs and is thus independent of them. Consequently, we can write out the expected value of the payoff as follows:
\begin{align}
& \overbar{R}^{\theta}_t(s, a \mid p^{\infty}) :=  \ \mathbb{E} \left[ R^{\theta}(s, a, q^{\theta} \mid p^{\infty}) \right] \label{eq:EPayoff}  = \ - \ \overbar{P}^{n(\theta)}_t(p^{\infty}_{_{T_{day}(t)}}) \times \mathbb{E}_{q^{\theta}} \left[ b^{\theta}_{i,t}(e_{i,t}, a_{i,t}, q^{\theta}_{i,t}) \right], 
\end{align}
where \( \overbar{P}^{n(\theta)}_t(p^{\infty}_{_{T_{day}(t)}}) \) represents the expected LMP at node $n(\theta)$ and time $t$.\\[-10pt] 

\noindent\textbf{Remark 1.} (Boundedness of $\overbar{R}^{\theta}_t$.)
Note that we assumed both the net load and energy storage capacity of each agent are bounded. Therefore, each agent's bid is bounded, regardless of external uncertainties. For net load, following a similar approach to how we define individual energy storage capacity, we assume that the total net load for each agent type $\theta$ is bounded by an upper limit $\overbar{Q}^{\theta}$. As a result, the aggregate demand at each time $t$, $\mathbf{B}_t = (B^1_t, \ldots, B^N_t)$, lies within a compact region. By the Lipschitz continuity of the LMPs with respect to $\mathbf{B}_t$ (under the assumption that the LICQ holds at all the feasible points), the LMPs are uniformly bounded (over the feasible region of $\mathbf{B}_t$). Hence, the payoff function $R^{\theta}_{i,t}(s_{i,t}, a_{i,t}, q^{\theta}_{i,t} \mid p^{\infty}_{_{T_{day}(t)}})$, along with its expected value, is also uniformly bounded. \\[-10pt]

\noindent\textbf{Remark 2.} (Continuity of $\overbar{R}^{\theta}_t$.) Based on the formulation of an agent's bid in \eqref{eq:bids}, for a given $e \in S$, the function is continuous with respect to the action $a$. This is evident because the bid function consists of two parts: one for $a > 0$ and the other for $a < 0$. In both cases, the max and min functions are continuous, and so is the charging/discharging efficiency function \eqref{eq:eff}. Hence, their product is continuous as well. At $a = 0$, whether approaching from $a \to 0^+$ or $a \to 0^-$, the bid function $b^{\theta}_{i,t}(e_{i,t}, a_{i,t}, q^{\theta}_{i,t})$ always converges to $q_{i,t}^{\theta} \cdot \overbar{e}^{\theta}$. This reflects the fact that as the action approaches zero (i.e., no charging or discharging), the bid approaches the net load $q_{i,t}^{\theta} \cdot \overbar{e}^{\theta}$. Therefore, for each $e \in S$, the expected payoff function $\overbar{R}^{\theta}_t(s, a \mid p^{\infty})$ is continuous with respect to $a$.

\subsection{Dynamic Optimization and Optimal Policy} 
The repeated decision-making problem of how to submit quantity bids and manage energy storage to maximize a prosumer's  long-term payoff can be modeled as a stochastic dynamic programming (SDP) problem. Specifically, each agent $i$ aims to maximize the following expected discounted payoff over an infinite time horizon:
\small
\begin{align}
\sup_{\pi_{i,t}}  \mathbb{E}\left[\sum_{t=0}^{\infty} \beta^t 
R^{\theta}_{i,t}(s_{i,t}, a_{i,t}, q^{\theta}_{i,t} \mid p^{\infty}_{t}) \,\bigg|\, a_{i,t} \sim \pi_{i,t},\ s_{i,0},\  p^{\infty}_{0}\right],   \label{eq:SDP}
\end{align}
\normalsize
where $\beta \in (0,1)$ is the discount factor. The stochastic process begins with an initial individual state \( s_{i,0} \) and population profile \( p^{\infty}_{0} \). At time $t = 0, 1, \ldots$ , the agent selects an action \( a_{i,t} \) according to a policy \( \pi_{i,t} \).


Assume that the population profile is already at an equilibrium (its existence is the main subject of Section~\ref{sec:MFG}). Since the only actions are energy storage charging and discharging -- which we model as percentages -- the action space is compact, irrespective of the state. As discussed in Remarks~1 and~2, the expected stage reward function is bounded and continuous with respect to the actions. Additionally, the state transition function~\eqref{eq:StateTransition} is continuous with respect to the action~$a$. Therefore, by the well-established result in~\cite{Puterman_MDP} (Theorem~6.2.12), an optimal stationary policy of~\eqref{eq:SDP} exists. Furthermore, according to a well-known result in stochastic dynamic programming~\cite{BertsekasDP} (Proposition~1.2.3), a stationary optimal policy must satisfy the Bellman equation.

To formulate the Bellman equation, we first define the value function for type-$\theta$ agents, which depends on both an agent's individual state and the population profile. For simplicity, we remove the agent index $i$ here, but still keep the type index $\theta$. 
With  a population profile $p^{\infty}=[p^{\infty}_{h}]_{h=1}^{H}$ and a stationary policy $\pi^{\theta}$, the expected discounted present value for each state variable $s \in S$ can be expressed as follows:
\begin{align}
&  V^{\pi^{\theta}}(s, p^{\infty}) = \mathbb{E}\Bigg[\sum_{t=0}^{\infty} \beta^{t} 
R^{\theta}_{t}(s_{t}, a_{t}, q^{\theta}_t|p^{\infty})\bigg| a_{t}\sim\pi^{\theta} ,\ s_0\Bigg]. 
\end{align}
Let $V^{\pi^{\theta^*}}(s, p^{\infty}) = \max_{\pi^{\theta}\in\Pi^{\theta}}  V^{\pi^{\theta}}(s, p^{\infty})$, where $\Pi^{\theta}$ is the set of all admissible policies of the type-$\theta$ agent. 
The well-known Bellman equation can then be written as,
\begin{align}
&V^{\pi^{\theta^*}}(s, p^{\infty}) = \ \max_{a\in \mathcal{A}} \left\{\overbar{R}^{\theta}(s, a|p^{\infty}) + \beta  V^{\pi^{\theta^*}}\left[Tr(s,a),\ p^{\infty}\right]\right\}, \label{eq:Bellman} 
\end{align}
where the function $\overbar{R}^{\theta}(s, a|p^{\infty})$ represents the expected value of a one-period payoff, and  $Tr(s,a)$  is the general state transition for both the storage's SoC and the time of day. Note that, as emphasized earlier, the state transition is deterministic. Therefore, the Bellman equation does not require a transition probability density function to describe how the state evolves.
The corresponding optimal policy mapping is  
\begin{align}
& \Pi^{\theta^*}(s, p^{\infty}) \equiv   \arg\max_{a\in\mathcal{A}} \left\{\overbar{R}^{\theta}(s, a|p^{\infty}) + \beta  V^{\pi^{\theta^*}}\left[Tr(s,a),\ p^{\infty}\right]\right\}. \label{eq:OptPolicy}
\end{align}

In the following, we show a key property regarding the agents' optimal stationary policy, which is crucial for proving the existence of an MFE in the next section.

\begin{proposition}
	Under the assumptions of Proposition \ref{prop:LMP_LipCont}, for an agent of type $\theta$, the optimal stationary policy mapping $\Pi^{\theta^*}(s,\ p^{\infty})$ is single-valued and continuous  with respect to $(s,\ p^{\infty})$. 
	\label{prop:unique}
\end{proposition}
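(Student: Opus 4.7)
The plan is to combine the Bellman equation \eqref{eq:Bellman} with a strict-concavity argument in the action and then invoke Berge's maximum theorem. First, I would restrict the feasible set in \eqref{eq:OptPolicy} to the unclipped interval $[-e,\,1-e]$: on $a>1-e$ the transition $Tr(s,a)=(1,h{+}1)$ is independent of $a$ while the bid $q^{\theta}\overbar{e}^{\theta}+(1-e)\overbar{e}^{\theta}/\eta(a)$ is strictly increasing in $a$ (because $\eta$ is positive and strictly decreasing), so $\overbar{R}^{\theta}(e,a\mid p^{\infty})$ strictly decreases on $(1-e,1]$; a symmetric argument rules out $a<-e$. Setting $e'=e+a\in[0,1]$ on this interval, the bid reduces to $q^{\theta}\overbar{e}^{\theta}+\overbar{e}^{\theta}\psi(e'-e)$ where $\psi(a)=\alpha_{0}a+\alpha_{d}a^{2}$ on $[-1,0]$ and $\psi(a)=a/(\alpha_{0}-\alpha_{c}a)$ on $[0,1]$. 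Differentiating twice gives $\psi''>0$ on each half-interval, and at $a=0$ the left derivative $\alpha_{0}$ is strictly less than the right derivative $1/\alpha_{0}$ (since $\alpha_{0}\in(0,1)$), so $\psi$ is strictly convex on $[-1,1]$.

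Second, I would show that $V^{\pi^{\theta^{*}}}(\cdot,h,p^{\infty})$ is concave in $e$ for every $(h,p^{\infty})$. The reparameterized Bellman equation reads
\[
V^{\pi^{\theta^{*}}}(e,h,p^{\infty})=\max_{e'\in[0,1]}\Bigl\{-\overbar{P}^{n(\theta)}(p^{\infty}_{h})\,\overbar{e}^{\theta}\bigl(\mathbb{E}[q^{\theta}]+\psi(e'-e)\bigr)+\beta V^{\pi^{\theta^{*}}}(e',h{+}1,p^{\infty})\Bigr\}.
\]
Since $\overbar{P}^{n(\theta)}>0$ (under the blanket positivity assumption on aggregate net demand), $\psi(e'-e)$ is convex in $(e,e')$ as a composition of a convex function with an affine map, and the feasible set $[0,1]$ for $e'$ does not depend on $e$, the integrand is jointly concave in $(e,e')$ whenever $V^{\pi^{\theta^{*}}}(\cdot,h{+}1,p^{\infty})$ is concave in $e'$; maximizing out $e'$ then yields a concave function of $e$. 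Thus the Bellman operator preserves concavity, and because it is a $\beta$-contraction on bounded functions (standard MDP argument, using that $\overbar{R}^{\theta}$ is uniformly bounded by Remark~1) its unique fixed point is concave in $e$ for every $h$. Strict convexity of $\psi$ in $e'$, together with concavity of $V^{\pi^{\theta^{*}}}(\cdot,h{+}1,p^{\infty})$, renders the bracketed objective strictly concave in $e'$ on $[0,1]$, so the maximizer $e'^{*}(e,h,p^{\infty})$, and hence $a^{*}=e'^{*}-e$, is unique.

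For continuity in $(s,p^{\infty})$, Proposition~\ref{prop:LMP_LipCont} and the boundedness of the bid space imply that $\overbar{P}^{n(\theta)}(p^{\infty}_{h})$ depends continuously on $p^{\infty}_{h}$ (weak continuity of the expected aggregate demand $\mathbf{B}_{t}$ under the bounded bid function \eqref{eq:bids} combined with Lipschitz continuity of the LMP in $\mathbf{B}_{t}$). The Bellman operator $\mathcal{T}_{p^{\infty}}$ is then a $\beta$-contraction whose coefficients depend continuously on $p^{\infty}$, so a standard parametric-contraction argument (uniform convergence of $\mathcal{T}_{p^{\infty}}^{k}V_{0}$ together with continuous-function preservation) gives continuity of $V^{\pi^{\theta^{*}}}(\cdot,\cdot,p^{\infty})$ in $p^{\infty}$. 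The objective in \eqref{eq:OptPolicy} is therefore jointly continuous in $(s,a,p^{\infty})$ on the compact action set $\mathcal{A}=[-1,1]$ that does not vary with parameters, so Berge's maximum theorem gives upper hemicontinuity of the argmax; combined with single-valuedness from Step~2, this upgrades to ordinary continuity, proving the claim. The main obstacle is the concavity-preservation step: with the full (clipped) bid formula the Hessian of $b^{\theta}(e,a,q)$ in $(e,a)$ is indefinite in the clipped regions, so joint concavity of the integrand would fail outright. The clipping-removal argument of Step~1 is precisely what sidesteps this difficulty and makes the whole scheme go through.
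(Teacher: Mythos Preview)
Your proof is correct and follows essentially the same architecture as the paper's: restrict to the unclipped interval $[-e,1-e]$, establish strict concavity of the one-stage reward in $a$, propagate concavity of the value function in $e$ through the Bellman contraction, and conclude single-valuedness and continuity via Berge's maximum theorem. Your execution is in fact slightly cleaner in two places --- you verify strict convexity of $\psi$ directly via piecewise second derivatives plus the derivative jump $\alpha_{0}<1/\alpha_{0}$ at $a=0$, whereas the paper constructs two auxiliary concave functions and takes their minimum; and your reparameterization $e'=e+a$ renders the feasible set $[0,1]$ independent of $e$, which sidesteps the paper's appeal to Fiacco's hull-concavity result for the concavity-preservation step.
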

The proof is in Online Appendix \ref{subsec:PolicyUnique}.

\section{Multiagent Mean-field Games}
\label{sec:MFG}
In this section, we first provide the precise definition of an MFE and show its existence in the context of DER integration into a wholesale electricity market. We then provide an algorithmic approach that enables agents to adaptively learn how to play in the mean-field game, which facilitates fully decentralized control automation.

\subsection{Mean-Field Equilibrium: Definition and Existence}

The essence of an MFE in this context is that each agent assumes the LMPs are at a long-run equilibrium and believes their individual actions do not influence this equilibrium. Based on this assumption, each agent selects an optimal strategy, which collectively leads to an equilibrium consistent with the assumed LMPs. This state is known as an MFE. A more precise definition of MFE is provided below.



\begin{definition}
	\label{def:MFE}
	A collection of stationary strategy $\pi^{*} := [\pi^{\theta^{*}}]_{\theta \in \Theta}$ and a population profile $p^{\infty}:=\big[[p^{\infty,\theta}_1]_{\theta \in \Theta},\cdots,[p^{\infty,\theta}_{H}]_{\theta \in \Theta}\big] \in \mathcal{P}(\mathcal{S})^{|\Theta|\times H}$ constitute an MFE if for each $\theta \in \Theta$ and $h = 1, \ldots, H$,  the following two conditions hold: 
	\begin{itemize}
      \setlength{\itemindent}{0.5em}
		\item Optimality: for a given state $s \in \mathcal{S}$, $\pi^{\theta^{*}} \in \Pi^{\theta^*}(s, p^{\infty})$ as defined in \eqref{eq:OptPolicy}.
		
		\item Consistency: for all $S \times A \in \mathcal{B}(\mathcal{S}) \times \mathcal{B}(\mathcal{A}$), where $\mathcal{B}(\cdot)$ is the Borel algebra of the corresponding set, and $s \in S$, 
		\begin{align}
		\begin{split}
		&p^{\infty,\theta}_h(S \times A) = \int_{S \times A} \I_{S \times A}\bigg\{E\left(e,\pi^{\theta^*}_{h-1}\left(e, p^{\infty}\right) \right), \pi_h^{\theta^*}\Big( E\left(e,\pi^{\theta^*}_{h-1}(e,p^{\infty})\right), p^{\infty})\Big)\bigg\} \, dp^{\infty,\theta}_{h-1}(s,a), \label{eq:Invariant}
		\end{split}
		\end{align}
	\end{itemize}
\end{definition} 
\noindent where $E(e, a)$ represents the state transition function for the energy storage's state of charge,  as in \eqref{eq:StateTransition}. 
In \eqref{eq:Invariant}, when \( h = 1 \), it is understood that the model interprets \( h-1 \) as \( H \), which represents the final time period of the previous day. Additionally, with a slight abuse of notation, we use \( \pi^{\theta^*}_{h}(e, p^{\infty}) \) to denote the policy at the state where the time of day is \( h \); that is \(\pi^{\theta^*}_{h}(e, p^{\infty}) := \pi^{\theta^*}\big(s = (e, T_{day} = h),\ p^{\infty}\big) \).

Definition \ref{def:MFE} implies that under an MFE, the population profile at the same time of day on different days remains invariant when each agent adopts an optimal strategy according to \eqref{eq:OptPolicy}. Equivalently, $(\pi^*,p^{\infty})$ is an MFE if and only if $p^{\infty}$ is a fixed point of the MFE operator $\Phi:\mathcal{P}(\mathcal{S})^{H\times|\Theta|} \to \mathcal{P}(\mathcal{S})^{H\times|\Theta|}$ defined by:
\begin{align}
\begin{split}
\label{MFEOPT}
\Phi(p^{\infty})(S \times A)_{\theta \in \Theta}
= &\bmat{[\Phi^{\theta}_1(p^{\infty})(S \times A)]_{\theta \in \Theta}\\
	\vdots\\
	[\Phi^{\theta}_{H}(p^{\infty})(S \times A)]_{\theta \in \Theta}},\\ 
\end{split}
\end{align}
where 

\begin{align}
\begin{split}
&\Phi^{\theta}_h(p^{\infty})(S \times A)= \int_{S \times A} \I_{S \times A}\bigg\{E\left(e,\pi^{\theta^*}_{h-1}(e,p^{\infty}) \right),\pi_h^{\theta^*}\Big( E\left(e,\pi^{\theta^*}_{h-1}(e,p^{\infty})\right), \ p^{\infty})\Big)\bigg\} \, dp^{\infty}_{h-1}(s,a),\\
&\text{for } h=1\cdots H,\text{ and } \forall \theta \in \Theta.
\end{split}
\end{align}
Therefore, to show the existence of an MFE, we will prove that there is a fixed point of $\Phi$, to be presented in Proposition \ref{prop:MFE_exist} below. 

\begin{proposition}[Existence of an MFE]\label{prop:MFE_exist}
	Under the assumptions in Proposition \ref{prop:LMP_LipCont}, an MFE, as defined in Definition \ref{def:MFE}, exists for the prosumer bidding game of direct participation in a wholesale electricity market.
\end{proposition}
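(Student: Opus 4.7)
The plan is to recast the existence of an MFE as the existence of a fixed point of the operator $\Phi$ defined in \eqref{MFEOPT}, and then apply the Schauder--Tychonoff fixed-point theorem on the space of population profiles, equipped with the topology of weak convergence.

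First, I set up the functional-analytic framework. The joint state-action space $\Xi = \mathcal{S}\times\mathcal{A}$ is a compact metric space, being the Cartesian product of bounded closed intervals and a finite set. By Prokhorov's theorem, $\mathcal{P}(\Xi)$ endowed with the topology of weak convergence is a compact and convex subset of the locally convex topological vector space of finite signed Borel measures on $\Xi$ (regarded as the weak-* dual of $C(\Xi)$). The finite product $\mathcal{P}(\Xi)^{|\Theta|\times H}$ is therefore nonempty, convex, and compact in a locally convex TVS. Moreover, $\Phi$ maps this set into itself, since the right-hand side of \eqref{MFEOPT} is the pushforward of a probability measure under a Borel-measurable map, and hence is again a probability measure.

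The crucial step is establishing continuity of $\Phi$ with respect to weak convergence. By Proposition \ref{prop:unique}, for each $\theta$ and $h$, the optimal policy $\pi^{\theta^*}_h(e, p^\infty)$ is single-valued and jointly continuous in $(e, p^\infty)$. Combined with the continuity of the state-transition $E(e,a)$ from \eqref{eq:StateTransition} (a composition of affine functions with $\min$ and $\max$), the driving map
\[
\Psi^{\theta}_h(e, p^\infty) := \Big(E\!\left(e,\pi^{\theta^*}_{h-1}(e,p^\infty)\right),\ \pi^{\theta^*}_h\!\Big(E\!\left(e,\pi^{\theta^*}_{h-1}(e,p^\infty)\right),\, p^\infty\Big)\Big)
\]
is jointly continuous on the compact set $\mathcal{E}\times \mathcal{P}(\Xi)^{|\Theta|\times H}$. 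Hence, if $p^\infty_n \to p^\infty$ weakly, then by uniform continuity on compacta, $\Psi^\theta_h(\cdot, p^\infty_n) \to \Psi^\theta_h(\cdot, p^\infty)$ uniformly in $e$. For any bounded continuous test function $g$, a standard pushforward argument then gives
\[
\int g\circ \Psi^\theta_h(\cdot, p^\infty_n)\,dp^{\infty}_{n,h-1} \;\longrightarrow\; \int g\circ \Psi^\theta_h(\cdot, p^\infty)\,dp^{\infty}_{h-1},
\]
by combining uniform convergence of the integrands with weak convergence of the integrating measures. This proves $\Phi(p^\infty_n) \to \Phi(p^\infty)$ weakly, so $\Phi$ is continuous.

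Having verified all hypotheses, the Schauder--Tychonoff theorem yields a fixed point $p^\infty \in \mathcal{P}(\Xi)^{|\Theta|\times H}$ of $\Phi$. Setting $\pi^* := [\pi^{\theta^*}(\cdot, p^\infty)]_{\theta \in \Theta}$, the pair $(\pi^*, p^\infty)$ satisfies both the optimality and consistency conditions of Definition \ref{def:MFE}, completing the argument. The main obstacle, as anticipated, is the continuity step, because both the integrating measure and the integrand depend on $p^\infty$; this dual dependence is ultimately resolved by the joint continuity of the optimal policy provided by Proposition \ref{prop:unique}, together with compactness of the state-action space.
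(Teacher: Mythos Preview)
Your proposal is correct and follows essentially the same route as the paper: both establish existence by applying the Schauder--Tychonoff fixed-point theorem to the MFE operator $\Phi$ on the compact convex set $\mathcal{P}(\Xi)^{|\Theta|\times H}$ (with the weak topology), using the single-valuedness and joint continuity of the optimal policy from Proposition~\ref{prop:unique} together with boundedness of the reward to secure continuity of $\Phi$. The only difference is expository: the paper delegates the verification to Theorem~3 of \cite{light2022mean}, whereas you spell out the pushforward-continuity argument directly.
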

The proof uses the Schauder-Tychonoff Fixed Point Theorem; the details are provided in Online Appendix \ref{subsec:Proof_MFEExist}.

\subsection{Finite Agents and Approximate Markov-Nash Equilibrium}

The existence of an MFE established in the previous subsection assumes an infinite number of agents. A natural question arises: What happens when the number of agents is large but finite? More specifically, if each agent in a finite system adopts the mean-field equilibrium policy, which was derived under the infinite-agent assumption, how does this affect the system’s equilibrium properties? 

To address this question, we first formally define the Markov-Nash equilibrium and its approximate counterpart, the $\epsilon$-Markov-Nash equilibrium. For notational convenience, we omit the type index $\theta$ corresponding to an agent $i = 1, \ldots, I$. For a finite number of agents $I$, let $M^i$ denote the set of all Markov policies for agent $i$, and define the Cartesian product $M^\mathcal{I} := \Pi_{i=1}^{I} M^i$. Let $\bm{\pi^{\mathcal{I}}} \in M^\mathcal{I}$ denote the collection of policies of the $I$ agents, i.e., $\bm{\pi^{\mathcal{I}}} = (\pi_1, \ldots, \pi_I)$. The empirical population profile at time $t$, denoted by $p_t^{\mathcal{I}}$, is defined as in \eqref{eq:EmpiricalPopProfile}; that is, $p_t^{\mathcal{I}}(\cdot, \cdot) = [p_t^{\theta}]_{\theta\in\Theta}.$
The initial state of each agent is given by $s_{i,0} = (e_{i,0}, T_{day}(0))$, where the initial state of charge $e_{i,0}$ follows a distribution in $\mathcal{P}[0,1]$, and the initial time of the day $T_{day}(0)$ is arbitrary. The initial states of different agents are assumed to be independent.
Let $\bm{\pi}^{\mathcal{I}}_{-i}$ denote the collection of policies for all agents except agent $i$. The discounted total reward for agent $i$ in a finite-agent game is defined as:
\begin{align}
& J_i^{\mathcal{I}}(\pi^{\mathcal{I}}_i, \bm{\pi}^{\mathcal{I}}_{-i}) \ = \mathbb{E}\left[  \sum_{t=0}^{\infty} \beta^t 
R^{\theta}_{i,t}(s_{i,t}, a_{i,t}, q^{\theta}_{i,t} \mid p^{\mathcal{I}}_{t}) \,\bigg|\, a_{i,t} \sim \pi^{\mathcal{I}}_{i} \hspace*{-2pt}, s_{i,0},  p^{\mathcal{I}}_{0}\right].  \nonumber
\end{align} 

\begin{definition} (Definition 2.2, \cite{MarkovNash}) 
	A policy  $\bm{\pi}^{\mathcal{I}^*} \in M^{\mathcal{I}}$ is a Markov-Nash equilibrium if 
	\begin{equation}\label{eq:MNash}
	J_i^{\mathcal{I}}(\pi^{\mathcal{I^*}}_i, \bm{\pi}^{\mathcal{I}^*}_{-i}) = \sup_{\pi_i \in M_i} J_i^{\mathcal{I}}(\pi_i, \bm{\pi}^{\mathcal{I^*}}_{-i}),\ i=1, \ldots, I. 
	\end{equation}
	It is an $\epsilon$-Markov-Nash equilibrium if 
	\begin{equation}\label{eq:MNash_epsilon}
	J_i^{\mathcal{I}}(\pi^{\mathcal{I^*}}_i, \bm{\pi}^{\mathcal{I}^*}_{-i}) \geq \sup_{\pi_i \in M_i} J_i^{\mathcal{I}}(\pi_i, \bm{\pi}^{\mathcal{I^*}}_{-i}) - \epsilon,\ i=1, \ldots, I. 
	\end{equation}
\end{definition}

It has been established in \cite{MarkovNash} that for any given $\epsilon > 0$, an $\epsilon$-Markov-Nash equilibrium exists when the number of agents $I$ is sufficiently large. However, this result relies on several technical conditions that may be challenging to verify in general settings.
In our case, the specific modeling of the deterministic state transition for the SoC of energy storage significantly simplifies the verification of these conditions. In the following, we demonstrate that these assumptions hold in our framework, justifying the use of the MFE policy even in a finite-agent game. 

\begin{proposition}[$\epsilon$-Markov-Nash Equilibrium]\label{prop:epsilon_MNE}
	Under Assumption \ref{assump:LICQ}, for any \( \epsilon > 0 \), there exists a positive integer \( I(\epsilon) \) such that for all \( I \geq I(\epsilon) \), the policy \( \pi^{(I)} = (\pi_1, \pi_2, \ldots, \pi_I) \), where each \( \pi_i \) is defined as in \eqref{eq:OptPolicy} \ for \( i = 1, \ldots, I \), constitutes an \( \epsilon \)-Markov-Nash equilibrium for the game involving \( I \) prosumers participating in a wholesale energy market.
\end{proposition}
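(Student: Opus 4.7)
The plan is to invoke the general existence theorem for $\epsilon$-Markov-Nash equilibria in mean-field games established in \cite{MarkovNash} (specifically their main approximation theorem), and then verify that its regularity hypotheses are satisfied in our setting. The overall strategy is: (i) identify the technical assumptions of that theorem; (ii) check each of them against the model developed in Sections \ref{sec:Wholesale}--\ref{sec:MFG}; (iii) conclude that the MFE policy $\pi^{\theta^*}(\cdot, p^{\infty})$ from \eqref{eq:OptPolicy}, applied symmetrically to each of the $I$ prosumers, is $\epsilon$-Markov-Nash for all $I$ sufficiently large.

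The hypotheses to be verified are, roughly: (a) compactness of the individual state and action spaces; (b) boundedness and joint continuity of the one-period reward in $(s,a,p^{\infty})$; (c) weak continuity of the state-transition kernel in $(s,a,p^{\infty})$; and (d) continuity of the optimal stationary policy $\pi^{\theta^*}(\cdot, p^{\infty})$ in the mean-field measure $p^{\infty}$ (usually required so that a small perturbation of the empirical measure only slightly alters the individual best response). Each of these holds in our setting: the state space $\mathcal{E}\times\mathcal{H}=[0,1]\times\{1,\ldots,H\}$ and the action space $\mathcal{A}=[-1,1]$ are compact; the reward $\overline{R}^{\theta}_t(s,a\mid p^{\infty})$ is uniformly bounded (Remark 1) and jointly continuous in $(s,a,p^{\infty})$ since the bid function \eqref{eq:bids} is continuous in $(e,a)$ (Remark 2) and $\overline{P}^{n(\theta)}_t(\cdot)$ is Lipschitz continuous in the aggregate demand vector by Proposition \ref{prop:LMP_LipCont} (which, composed with the continuous map from $p^{\infty}$ to aggregate demand, yields continuity in $p^{\infty}$); the transition kernel is in fact deterministic, given by $Tr(s,a)$, and does not depend on $p^{\infty}$ at all, so weak continuity is automatic; and continuity of $\pi^{\theta^*}$ in $(s,p^{\infty})$ is exactly the content of Proposition \ref{prop:unique}.

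With these ingredients in place, the argument proceeds in two steps. First, under the infinite-agent MFE strategy profile, a law-of-large-numbers/propagation-of-chaos argument shows that the empirical population profile $p^{\mathcal{I}}_t$ generated by $I$ independent agents, each running the MFE policy $\pi^{\theta^*}$ from independent initial states, converges weakly (uniformly in $t$ over a discounted horizon) to the MFE profile $p^{\infty}_{T_{day}(t)}$ as $I\to\infty$. Here the deterministic SoC transition is again a simplification: randomness enters only through the net load $q^{\theta}_{i,t} = \omega^{\theta}_t + \zeta^{\theta}_{i,t}$, where the idiosyncratic $\zeta^{\theta}_{i,t}$ average out across agents while the common shock $\omega^{\theta}_t$ acts identically on all type-$\theta$ agents and enters the reward through the (Lipschitz) price function. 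Second, using the continuity and boundedness of $\overline{R}^{\theta}_t$ in $p^{\infty}$, together with discounting by $\beta<1$ to truncate the infinite horizon at a finite depth with error $O(\beta^{T}/(1-\beta))$, a single agent's best-response value in the $I$-agent game differs from its MFE value by at most $\epsilon/2$ for $I$ sufficiently large; by the same continuity argument, the achieved value under $\pi^{\theta^*}$ differs from the MFE value by at most $\epsilon/2$. Combining these yields \eqref{eq:MNash_epsilon}.

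The main obstacle is the first step above: obtaining the uniform-in-$t$ weak convergence of the empirical measure to the MFE profile along the closed-loop dynamics, since the per-agent policy depends on both the state and the (believed) mean field. The standard way to handle this is to exploit the deterministic SoC transition and compactness of $\mathcal{S}\times\mathcal{A}$, so that a coupling argument between the finite-$I$ system and the infinite-population limit can be constructed; the Lipschitz continuity of the LMPs in aggregate demand (Proposition \ref{prop:LMP_LipCont}) and the continuity of $\pi^{\theta^*}$ in $p^{\infty}$ (Proposition \ref{prop:unique}) prevent errors from blowing up when propagated through one step of the dynamics, and the discount factor $\beta$ controls their cumulative effect. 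Once this uniform convergence is secured, the $\epsilon$-Markov-Nash property follows by a direct reward-comparison argument as outlined.
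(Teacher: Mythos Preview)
Your proposal is correct and follows essentially the same route as the paper: invoke the approximation theorem of \cite{MarkovNash} and verify its regularity hypotheses using the structural facts already in hand (compact state and action spaces, bounded and jointly continuous expected reward via Remark~1, Remark~2 and Proposition~\ref{prop:LMP_LipCont}, and the deterministic, population-independent transition \eqref{eq:StateTransition}). The paper's own proof is terser---it simply asserts that the nine conditions of \cite{MarkovNash} are met and stops---while you additionally sketch the propagation-of-chaos and reward-comparison machinery that underlies that theorem; this extra exposition elaborates the same argument rather than replacing it.
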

\begin{proof}{Proof}
	To prove the result, we need to verify that the required nine conditions, as outlined in \cite{MarkovNash}, are satisfied under our model. 
	The continuity of the one-stage reward function with respect to the state, action, and population profile, as established in the previous section, along with the compactness of the action space and the boundedness of the reward function (as stated in Remark 1), ensures that several key conditions are satisfied. Furthermore, the specific modeling of the state transition in \eqref{eq:StateTransition}, which is deterministic and independent of the population profile, automatically satisfies the remaining related to the transition dynamics. 
\end{proof}

\section{Learning in a Mean-Field Game: An Algorithmic Approach}
\label{sec:algo}
While previous results establish the existence of an MFE, they do not provide a direct strategy for agents to follow in games with a large number of participants. Various RL-based methods have been proposed to approximate the fixed-point iteration needed to converge to an MFE \citep{guo2019learning,guo2023general}. These approaches typically employ a double-loop structure: first, the population profile is fixed while each agent solves an MDP using an RL algorithm, such as Q-learning; then, the population profile is updated.

As highlighted by \cite{Yang_LearnPlayMFG}, the double-loop approach presents two major challenges: (1) the population profile usually evolves simultaneously with agents' policy updates, and (2) for large state spaces, function approximations (such as neural networks) used to represent value and policy functions make solving each subproblem computationally demanding. To address these issues, \cite{Yang_LearnPlayMFG} proposed a single-loop, online algorithm. In their method, the mean-field state is updated via a single step of gradient descent, while agents' policies are updated by one step of policy optimization, informed by real-time feedback from the game.
To ensure convergence, the algorithm employs a fictitious play mechanism, where agents probabilistically update their policies or do nothing. This mitigates instability by smoothing the learning process, allowing the mean-field state to evolve more gradually alongside policy updates.

While single-loop methods improve computational efficiency and stability compared to the double-loop approach, both methods still rely on a fundamental assumption: agents must have access to the global population profile, which is a probability distribution. However, this assumption is often impractical in real-world settings.
In contrast, we propose an approach that leverages a specific feature of our setting -- namely, fluctuations in electricity prices (aka the LMPs) reflect underlying changes in the population profile as well as external uncertainties. Instead of requiring direct access to the population profile, we allow agents to form beliefs about future LMPs at different times of the day. These beliefs guide agents' actions by solving their SDP problems, eliminating the need for explicit knowledge of the population profile. Agents then update their beliefs adaptively based on realized prices, facilitating decentralized and scalable decision-making.

Specifically, let $\widetilde{P}_{h}$ denote the agent's belief about the LMP at time of day \( h \), and \( P_{T_{day}(t) = h}\) is the actual price observed at period \( t \). The belief update rule for the \( h \)-th time of day is given by:
\begin{equation}
\label{eq:AdaptiveRule}
\widetilde{P}_{h} \leftarrow \widetilde{P}_{h} - \delta \cdot (\lfloor t/H\rfloor+1)^{-0.5}(\widetilde{P}_{h} -P_{T_{day}(t) = h}).
\end{equation}
The parameter \( \delta \) is a learning rate in \( (0,1) \), and \( \lfloor t/H \rfloor \) accounts for the total number of days elapsed. This rule adjusts an agent's belief using a diminishing step size, ensuring that recent observations have a greater impact while older data becomes less influential over time. 

With an agent's belief and a given state \( e_{t,h} \), an optimal decision \( a^* \) is determined by solving:
\begin{align}
a^* =\  & \arg\max_{a \in [-e_{t,h}, 1-e_{t,h}]} 
-\eta(a) \cdot \min\{a,0\} \cdot \widetilde{P}_h  -\max\{a,0\} \cdot \frac{\widetilde{P}_h}{\eta(a)}
+ \beta V_{h+1}(e_{t,h} + a, \widetilde{\mathbf{P}}), \label{eq:OptimalAction}
\end{align}
where \( \eta(a) \) is the charging/discharging efficiency and \( \beta \) is the discount factor, as defined earlier, and $\widetilde{\mathbf{P}}$ denotes the vector of LMP beliefs for all times of the day, given by $(\widetilde{P}_h )_{h=1}^H$. The value function $V_{h+1}$ -- despite being defined for a single period $h+1$ -- depends on LMP beliefs across all time periods.


At this stage, the proposed method remains heuristic; yet numerical experiments consistently demonstrate convergence to a steady state. This suggests that underlying theoretical convergence properties may exist, which we leave as an avenue for future research.

To enhance our algorithm's realism and demonstrate the demand response capability of DERs without direct load control, we incorporate demand and supply shocks to reflect real-world conditions, such as unexpected fluctuations in energy demand or renewable generation. For scenarios involving these shocks, we assume the ISO issues an emergency signal one hour before the event. During these periods, agents adapt their actions based on alternative sets of LMP beliefs corresponding to the type of shock. The value function and optimal decision rules are computed separately for regular and shock periods. Specifically, during a demand shock -- when demand surges and supply is likely insufficient, risking blackouts if no action is taken -- agents replace their regular LMP belief \( \widetilde{P} \) with \( \widetilde{P}^{\text{DS}} \). In contrast, during a supply shock -- where electric power supply likely exceeds demand, such as at night when wind energy surges but electricity demand is low -- agents switch to the LMP beliefs \( \widetilde{P}^{\text{SS}} \). Agents then use these alternative beliefs to determine optimal actions under supply or demand shocks, following the same approach as in \eqref{eq:OptimalAction}. To track the frequency of such events, agents maintain counters for demand and supply shocks, denoted as \( \tau_d \) and \( \tau_s \), respectively. After observing the actual LMP \( P_t \), agents update the LMP beliefs for supply or demand shocks using the same adaptive rule as in \eqref{eq:AdaptiveRule}, with the counters \( \tau_d \) and \( \tau_s \) replacing \( \lfloor t/H \rfloor \) for demand and supply shocks, respectively. 

The pseudocode summarizing the single agent's value-iteration algorithm, including responses to supply and demand shocks, is presented in Algorithm 1. Using the LMP beliefs in \eqref{eq:OptimalAction}, the problem becomes a typical SDP problem. We do not specify a particular algorithm for solving \eqref{eq:OptimalAction}, nor is exact computation required. Therefore, approximate dynamic programming methods, such as those in \cite{BertsekasDP, Powell}, are all applicable. In an extreme case, the algorithm can involve just one step of a gradient-descent-like method to enable a single-loop, online approach. This flexibility makes the framework scalable and adaptable, allowing agents to learn and act effectively in a mean-field game setting, where the mean-field is reflected through market prices. In our implementation, we introduce a small probability for each agent to restart in a random state with random LMP beliefs, a process referred to as regeneration. This serves two purposes. First, it ensures the multi-agent system remains active and adaptive, allowing agents to continue learning. When some agents regenerate, they must relearn, preventing the system from becoming static once it reaches a mean-field equilibrium. Second, it simulates real-world scenarios, where some agents may leave the market while new agents enter, reflecting the natural turnover in such systems.

\begin{algorithm}
\caption{Single Agent's Value-Iteration Algorithm}
\label{Algo_meanfield}
\begin{algorithmic}[1]
\State \textbf{Initialization:}
\State Randomly initialize \( e_{0,1} \in [0,1] \) (initial battery state).
\State Randomly initialize \( \widetilde{P} \) and shock-specific beliefs \( \widetilde{P}^{\text{DS}}, \widetilde{P}^{\text{SS}} \).
\State Set learning rate \( \delta \in (0,1) \); initialize shock counters $\tau_{d} = 0 $ and  $\tau_{s} = 0$.
\vskip8pt
\For{ \( t = 0, 1, \dots \)}
    \For{\textbf{each hour} \( h = 1, \dots, H \)}
            \If{no emergency signal received}
                \State Submit bid \( a^* \) based on \eqref{eq:OptimalAction} using the LMP belief \( \widetilde{P} \). 
                \State Update beliefs using \eqref{eq:AdaptiveRule} after the market price \( P_t \) is observed. 
            \ElsIf{demand shock signal received}
                \State Submit bid \( a^* \) based on \eqref{eq:OptimalAction} using the LMP belief \( \widetilde{P}^{\text{DS}} \).
                 \State Update beliefs after the market price \( P_t \) is observed as follows 
                \begin{equation*}
\widetilde{P}^{DS}_{h} \leftarrow \widetilde{P}^{DS}_{h} - \delta \cdot (\tau_d+1)^{-0.5}(\widetilde{P}^{DS}_{h} - P_{t}).
\end{equation*}
              \State Set $\tau_d \leftarrow \tau_d+1$.
            \ElsIf{supply shock signal received}
                \State Submit bid \( a^* \) based on \eqref{eq:OptimalAction} using the LMP belief \( \widetilde{P}^{\text{SS}} \).
                \State Update beliefs after the market price \( P_t \) is observed as follows 
                \begin{equation*}
\widetilde{P}^{SS}_{h} \leftarrow \widetilde{P}^{SS}_{h} - \delta \cdot (\tau_s+1)^{-0.5}(\widetilde{P}^{SS}_{h} - P_{t}). 
\end{equation*}
              \State Set $\tau_s \leftarrow \tau_s+1$.
            \EndIf
    \EndFor
\EndFor
\end{algorithmic}
\end{algorithm}

\section{Numerical Results}
\label{sec:num}
In this section, we apply Algorithm 1 to a test power system comprising both bulk generators and thousands of prosumers and consumers. Our objectives are as follows: (1) to assess if the algorithm can achieve convergence numerically; (2) to observe whether, upon convergence, the algorithm encourages the desired behavior of charging during peak sunshine hours and discharging in the evening when demand increases, ultimately smoothing LMPs and reducing volatility; and (3) to evaluate the algorithm's performance under random supply and demand shocks.
\subsection{Test Case}
In our experiment, we use the IEEE 14-bus system\footnote{Power Systems Test Case Archive -- 14 Bus Power Flow Test Case (\url{https://labs.ece.uw.edu/pstca/pf14/pg_tca14bus.htm}).} as the test case. We assume there is one generator at each bus, with each generator’s total generation cost represented by a quadratic function: \( C_n(g) = \frac{1}{2}\alpha_n g^2 + \beta_n g \). The parameters \( \alpha_n \) and \( \beta_n \) are chosen uniformly from the ranges \([0.0118, 0.0684]\$/\text{MW}^2\text{h}\) and \([150, 233]\$/\text{MWh}\), respectively, based on data from \cite{krishnamurthy20158}, for all $n$. Each power plant is assumed to have a $600$ MW capacity. 
All transmission lines' capacity is set to be 1,000 MW. Each node (bus) contains two types of agents: prosumers with DERs (solar, small wind and energy storage) and pure consumers. 

\begin{figure}[!htb]
    \centering
    \includegraphics[scale =0.5]{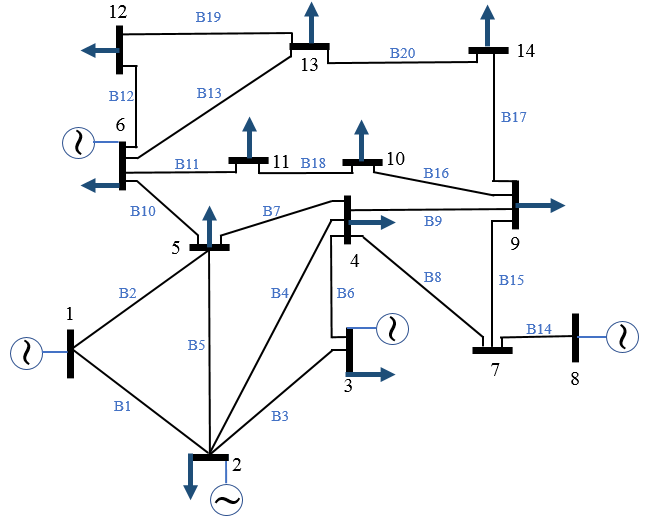}
    \caption{IEEE-14 test bus system}
    \label{fig:ieee14}
\end{figure}
For demand, we use CAISO data,\footnote{Both aggregate load and net load data are available at \url{https://www.caiso.com/todays-outlook##section-net-demand-trend}.} which includes both total aggregated load and net load, where net load is defined as gross load minus distributed wind and solar generation. Using CAISO's 2022 data, we compute two aggregated load shapes -- one for gross load and one for net load -- each representing the average 24-hour profile, as shown in Fig. \ref{fig:shape}.  
\begin{figure}[!htb]
\centering
    \includegraphics[scale=0.7]{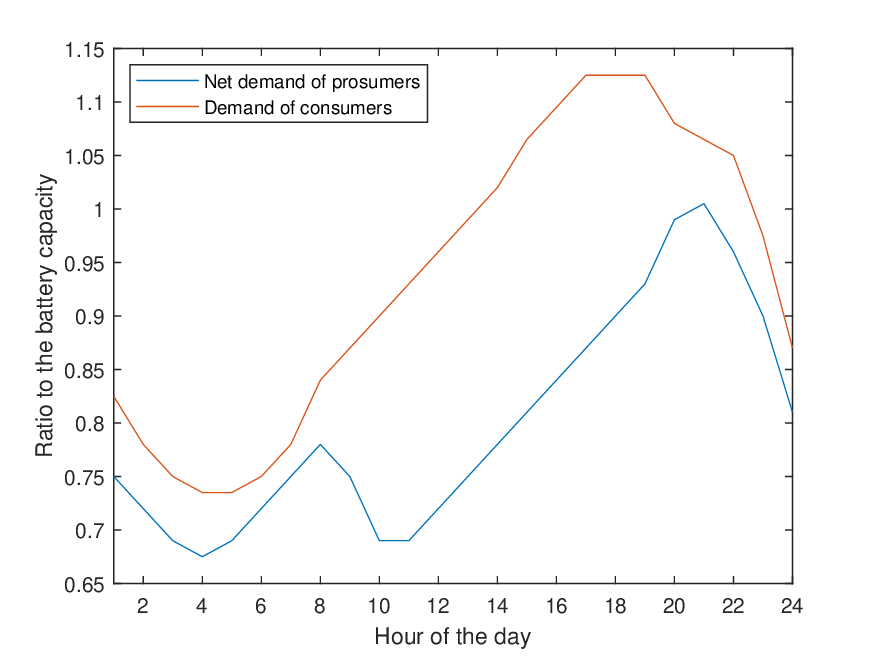}
    \caption{Daily load shapes for agents} \label{fig:shape}
\end{figure}
In our model, prosumers and consumers at the same bus are classified under the same location type, with all prosumers sharing a common net load shape and all consumers sharing a common gross load shape. To introduce variability across the 14 buses in our test network, we scale the CAISO load data by assigning each bus a unique load shape. Specifically, the base load profile is multiplied by a scaling factor, uniformly drawn from (0.9, 1.1), ensuring differentiation in demand patterns across buses.

To further introduce variability at the agent level, individual demand values in our simulation are sampled from a triangular distribution, with a lower bound of 0.8 times, an upper bound of 1.2 times the bus-specific average demand, and a mode equal to the average demand. This ensures that while all agents at a given bus follow the same general load pattern, their individual demand levels still vary, better capturing the diversity in consumption behaviors.

Each trading period in our simulation is set to one hour. To model demand surges, we introduce significant increases in energy demand between 6 PM and 9 PM on random days. This period aligns with the hours when solar output diminishes, providing an opportunity to evaluate how energy storage can be leveraged to mitigate early evening demand spikes within a completely decentralized decision-making framework. For supply shocks, we simulate increased generation, primarily driven by surges in distributed wind output, between 1 AM and 4 AM. These shocks occur on random days and are independent of demand shocks. The arrival of both demand and supply shocks is assumed to follow independent Poisson distributions, each with an arrival rate of 0.1 events per day. During a demand shock, the surge is represented as a percentage increase relative to typical demand, modeled using a triangular distribution with bounds [30\%, 50\%] and a mode of 40\%. Similarly, supply shocks involve increases in wind generation, modeled with a triangular distribution [20\%, 30\%] and a mode of 25\%. Agents are notified one hour in advance if the system operator anticipates a shock in the upcoming period.

The simulation includes 3,000 agents at each node, with each agent having a probability of 0.0001 of regenerating in each hour. To represent battery levels, the state space is discretized into 100 evenly spaced points between 0 and 100\%. 

\subsection{Result Analysis}
We run simulations using Algorithm \ref{Algo_meanfield} to model a 100-day period, repeated 10 times with different random seeds. Additionally,  we introduce two comparative scenarios: one where each agent maintains a single set of mean-field beliefs and does not adjust strategies in response to demand or supply shocks, and another without mean-field learning, where agents lack battery storage and bid solely based on their net load. This latter scenario represents a `grid-tied' setup in which solar or small wind generation is directly connected to the grid; any excess generation is immediately fed back into the grid without storage. 
Figure \ref{fig:diff} shows the average of relative difference between the belief of LMPs from an agent on Bus 3 and actual LMPs for Bus 3 across 10 runs. We select the LMPs from three typical hours -- 4 AM, 9 AM, and 9 PM -- when no supply or demand shocks occurred, as representative examples. 
The shaded region represents one standard deviation. It can be seen that the relative difference converges to almost zero quickly after about 10 days for all three hours, which indicates that agents' beliefs and their policies converge to a (mean-field) steady state quickly under our framework.
\begin{figure*}[!h]
   \centering
    \includegraphics[width=\textwidth]{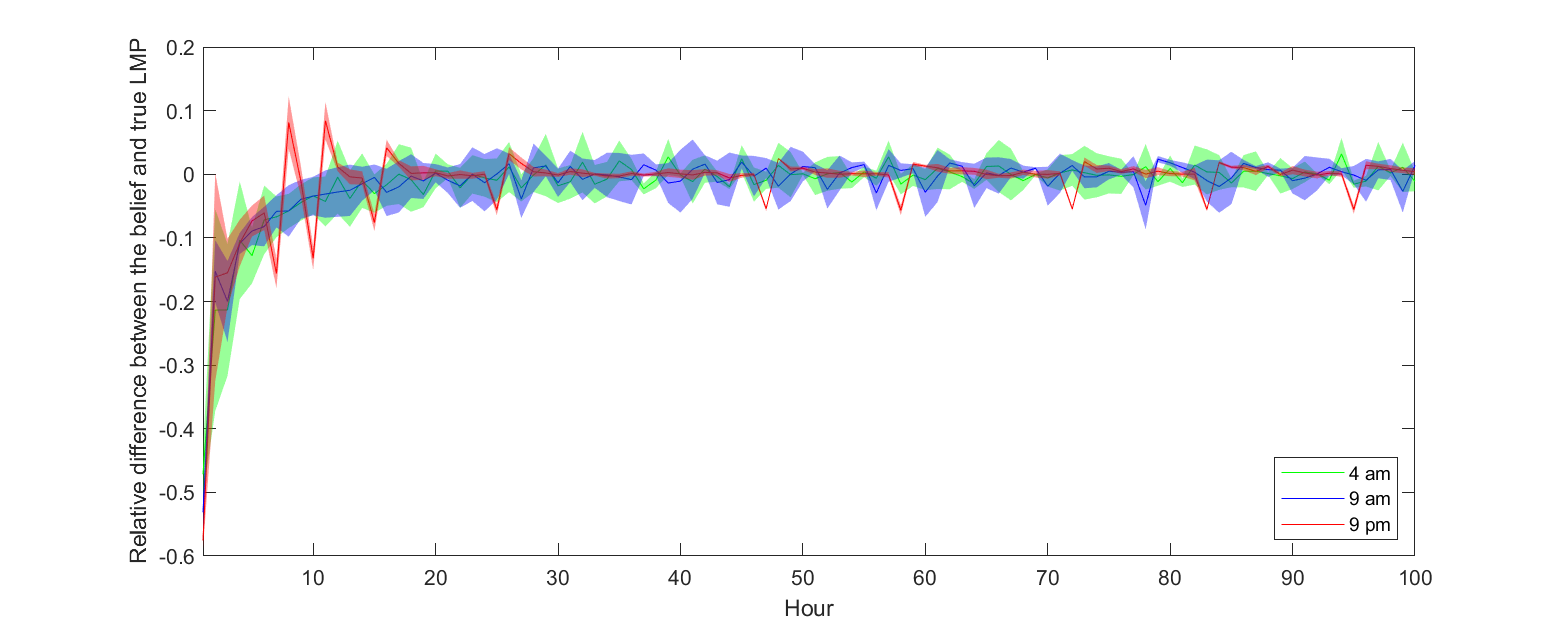}
    \caption{Relative difference between the belief and true LMP: mean-field learning without prior shock information}     \label{fig:diff}
\end{figure*}

\begin{figure*}[!h]
   \centering
    \includegraphics[width=\textwidth]{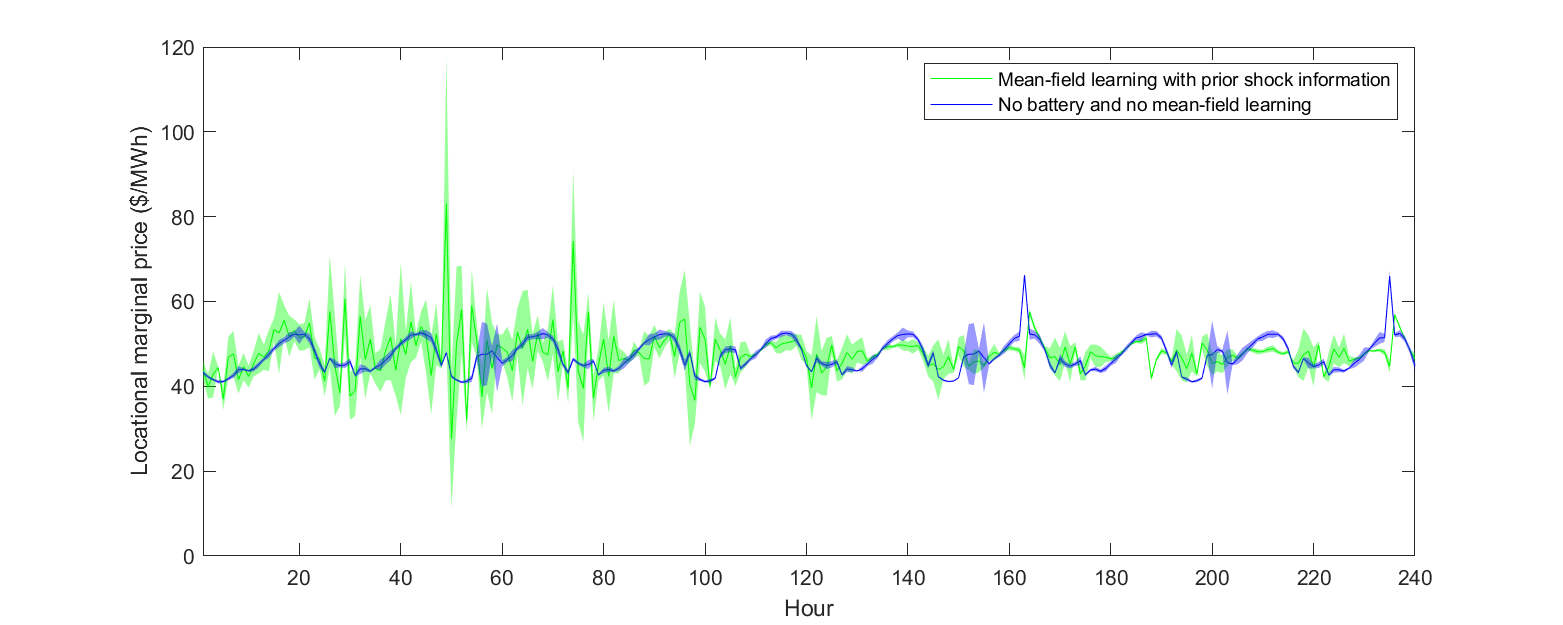}
   \caption{Hourly marginal prices of Bus 3 over the first 10 days: mean-field learning with prior shock information vs. no battery and no mean-field learning}
    \label{fig:f10days}
    {}
\end{figure*}
Figure \ref{fig:f10days} shows the realized LMPs for Bus 3, averaged over 10 runs, in chronological order for the first 10 days. The results indicate that LMPs stabilize quickly, reaching a steady state within less than 10 days, similar to the convergence pattern of LMP belief errors as in Figure \ref{fig:diff}. 
Compared to the LMPs in the scenario without energy storage (and therefore no agent learning), the LMPs with learning are lower during peak hours and higher during off-peak hours, resulting in reduced daily fluctuations.

\begin{figure*}[!h]
    \centering
    \includegraphics[width=\textwidth]{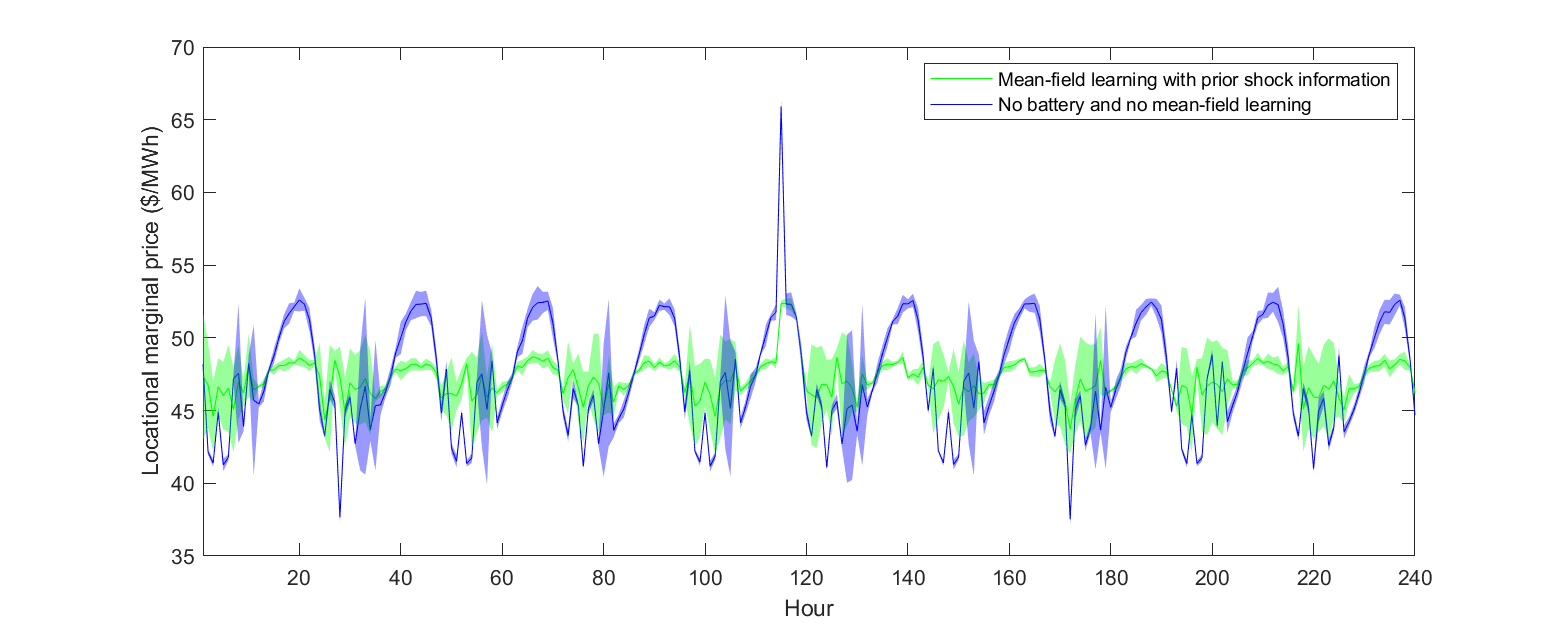}
   \caption{Hourly marginal prices of Bus 3 over the last 10 days: mean-field learning with prior shock information vs. no battery and no mean-field learning}
    \label{fig:l10days}
    {}
\end{figure*}
When there are supply or demand shocks, based on the results in Figure \ref{fig:l10days},  where the LMP for the last 10 days is presented in chronological order, the LMPs without mean-field learning can fluctuate dramatically. In comparison, with mean-field learning, the LMPs during demand shocks show only a slight increase from regular levels, while during supply shocks, they remain nearly unchanged.
As outlined in Section \ref{sec:algo}, agents with prior shock arrival information adjust their LMP beliefs for demand or supply shocks upon receiving the corresponding signals. Since the LMPs anticipated by agents during demand shock scenarios are significantly higher than those under regular conditions, prosumers discharge more energy from their batteries according to their optimal strategies. Similarly, during supply shocks, when anticipated LMPs are lower, prosumers may choose to charge more energy into storage to take advantage of the lower prices.
\begin{figure}[!htb]
\centering
    \includegraphics[scale=0.8]{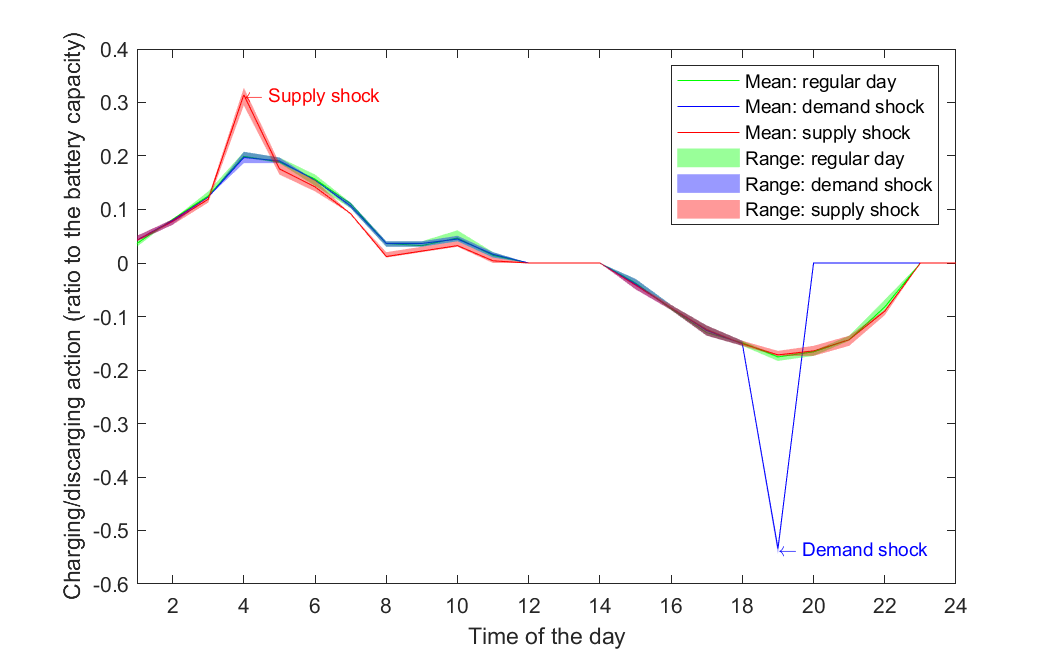}
   \caption{Charging/discharging actions over one day}
    \label{fig:action}    
\end{figure}

To evaluate how strategy adaptation during supply or demand shocks helps mitigate these events, we compare the performance of mean-field learning frameworks with and without prior knowledge of shock arrivals. This comparison is presented in Figures \ref{fig:f10days_shock} and \ref{fig:l10days_shock}, which display the average LMPs for Bus 3 over ten independent runs during the first and last ten days, respectively. While both frameworks perform similarly during regular hours, substantial differences arise during demand and supply shocks. Notably, the framework not using prior shock information and without pre-shock strategy adjustments struggles to manage significant price fluctuations during these critical periods, underscoring the importance of incorporating built-in mechanisms to address diverse emergency scenarios within the algorithm.

\begin{figure*}[!h]
\centering
    {\includegraphics[width=\textwidth]{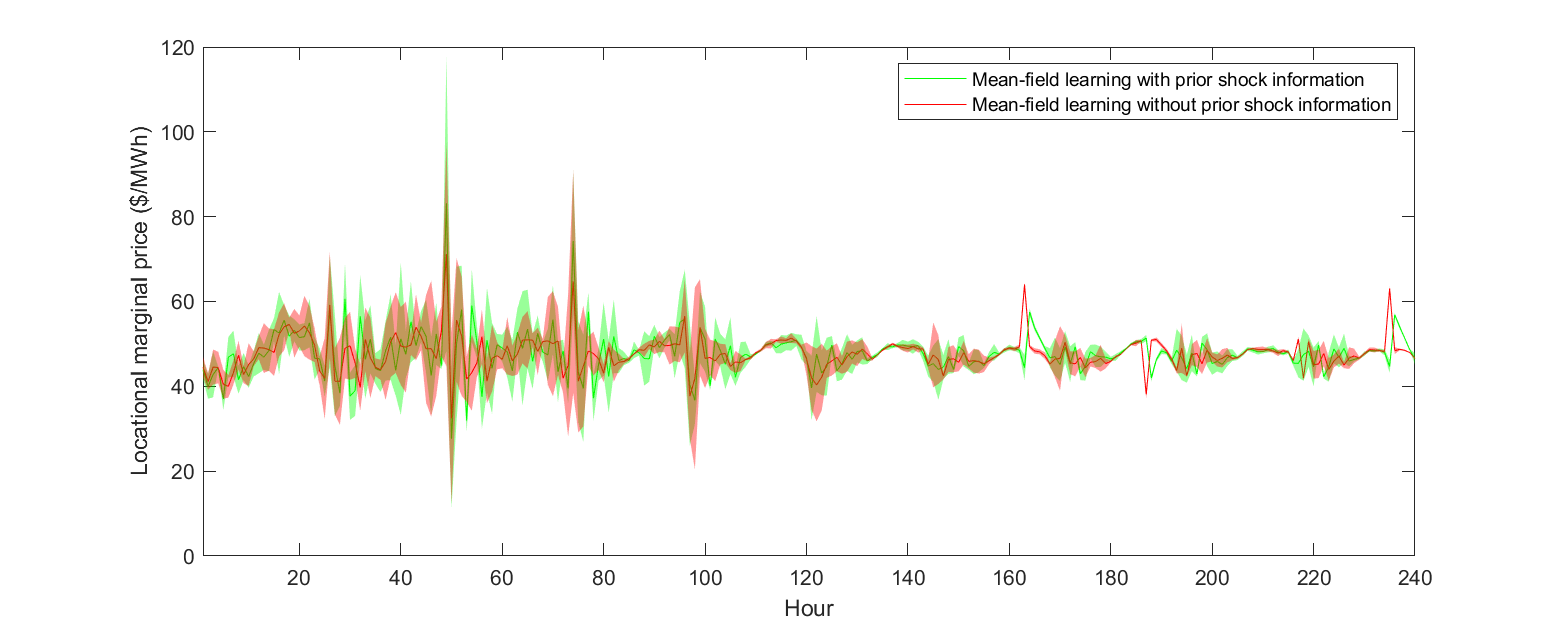}}
   \caption{Hourly marginal prices of Bus 3 over the first 10 days: mean-field learning with and without prior shock arrival information} 
    \label{fig:f10days_shock}    
    {}
\end{figure*}

\begin{figure*}[!h]
   \centering
   \includegraphics[width=\textwidth]{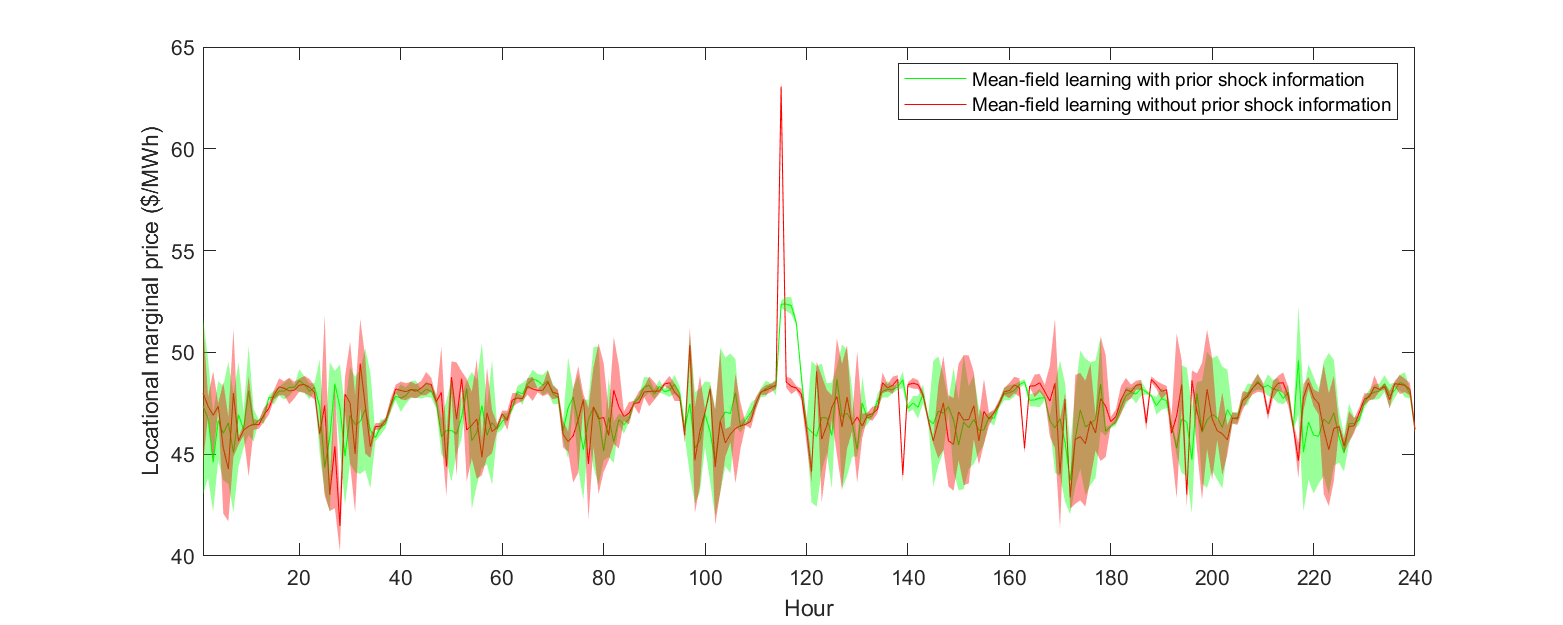}
 \caption{Hourly marginal prices of Bus 3 over the last 10 days: mean-field learning with and without prior shock arrival information}
    \label{fig:l10days_shock} 
\end{figure*}

To further compare the volatility across different cases, we adopt the volatility measure presented in \cite{roozbehani2012volatility}, which is the log-scaled incremental mean volatility (IMV). The IMV of a sequence $\{p_t\}_{t=1}^{\infty}$ is defined as
\begin{align}
\text{IMV} = \lim_{T\to\infty}\frac{1}{T}\sum_{t=1}^T |p_{t+1} - p_{t}|.
\end{align}
We approximate the IMV of a sequence of LMPs in our simulations using the prices from the last ten days, once the LMPs have reached a steady state. The average IMV over these ten days is computed as:  
$\overline{IMV} = \frac{1}{10} \sum_{i=1}^{10} IMV^i,$
where \( IMV^i \) represents the IMV of the \( i \)-th run. Table~\ref{IMV} presents the average IMVs at Bus 3, along with its standard deviation, over ten runs across three different learning approaches.
\begin{table*}[h!]
	\caption{%
		Averaged IMV of the LMPs at Bus 3 over 10 runs under three different scenarios}
	\vspace{0.2cm}
	\centering
	\begin{tabular}{l c c}
		\hline
		\centering \textbf{Scenario} & averaged IMV & Standard deviation\\
		\hline\hline
		\centering Mean-field
		learning with prior shock information & 0.348 &0.0038\\
		Mean-field
		\centering learning without prior shock information & 0.370&0.0035\\
		\centering No mean-field learning & 0.484& 0.0018\\
		
		\hline
	\end{tabular}
	\label{IMV}
\end{table*}
The results indicate that the scenario without mean-field learning exhibits greater volatility compared to the other two scenarios. Unsurprisingly, the mean-field learning framework with prior shock information achieves the lowest volatility, owing to its capacity to mitigate price fluctuations during shock hours effectively.

Finally, we compare the daily energy costs of all agents over the last ten days across 10 independent simulation runs, focusing on the scenario with mean-field learning and prior shock information versus the scenario without mean-field learning, as shown in Figure \ref{fig:energy_cost}. The results show a clear reduction in energy costs with mean-field learning.
\begin{figure*}[!h]
   \centering
    \includegraphics[width=\textwidth]{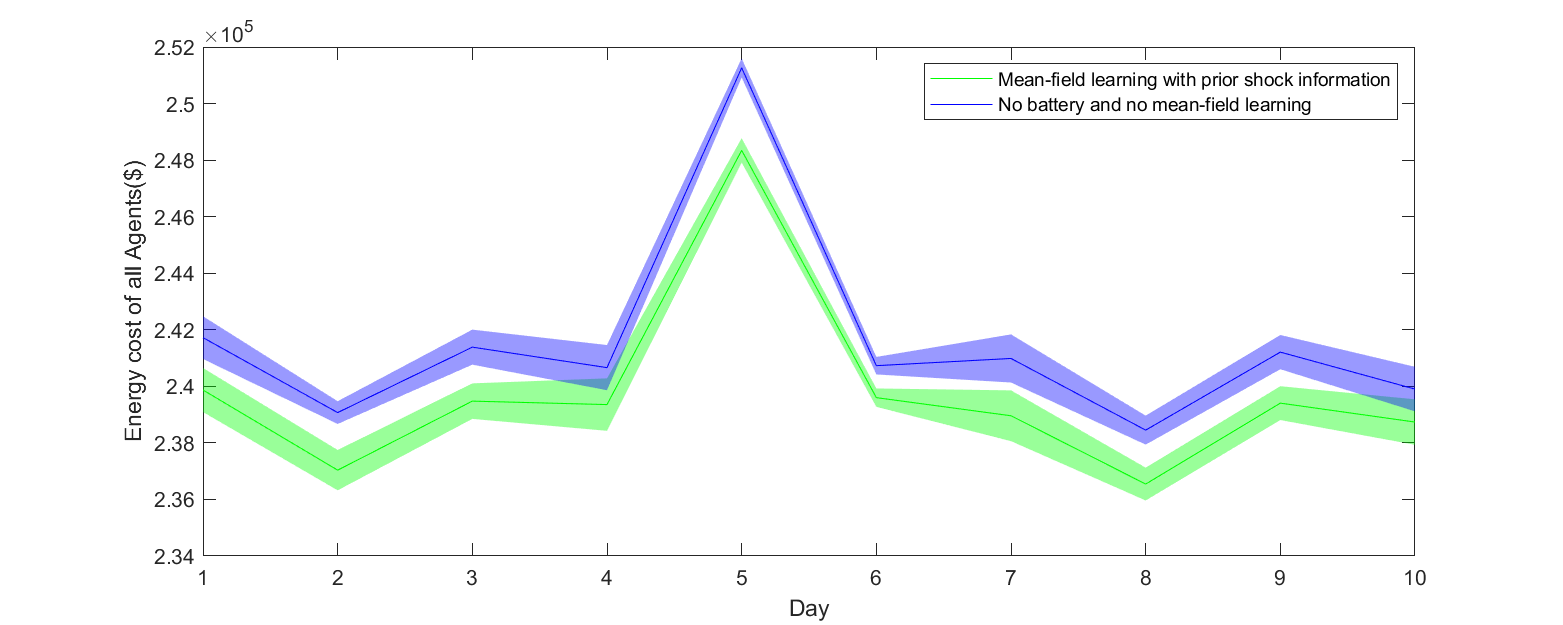}
    \caption{The total cost of all agents over the last 10 days: mean-field learning with prior shock information vs. no battery and no mean-field learning}
    \label{fig:energy_cost}
\end{figure*}


\section{Conclusion and Future Research}
\label{sec:concl}

In this paper, we propose a mean-field game-based model and an algorithmic framework to enhance the participation of DER owners in wholesale energy markets. Our approach enables prosumers to make autonomous decisions based on real-time electricity prices while maintaining control over their assets. The mean-field approach is appropriate since all market information is reflected in the LMPs, which, along with signals from the system operator, are the primary data available to consumers and prosumers. We also proved the existence of a mean-field equilibrium for an infinite number of agents and the existence of an $\epsilon$-Markov-Nash equilibrium for a finite but sufficiently large number of agents within this framework. Our numerical results indicate that, even with high renewable penetration or extreme weather conditions, the decentralized learning approach can help prevent extreme LMP fluctuations, contributing to a more stable energy market.

An immediate extension of this work is to investigate whether a system with a finite number of agents can converge to the mean-field equilibrium (MFE) as the number of agents approaches infinity. Additionally, developing a provably convergent algorithm to reach the MFE remains an important area for further research.
Incorporating uncertainty in renewable generation and demand forecasts into prosumers' decision-making framework, and applying a reinforcement learning algorithm, could further enhance the robustness of the model under real-world conditions. Furthermore, if aggregators adopt a more active role, a promising direction is to apply mean-field control within each aggregator while modeling interactions among multiple aggregators as a mean-field game. Preliminary numerical results are provided in our related work \citep{HeLiu24}. We are currently working on establishing the theoretical foundations of this approach and will report our findings in a follow-up paper.

\appendix
\section{Proofs}\label{app:Proof}
\subsection{Proof of Proposition \ref{prop:LMP_LipCont} -- Lipschitz continuity of LMPs}
\label{Proof_LipCont}
To prove the Lipschitz continuity of the LMPs with respect to the aggregate demand, we will need to resort to linear complementarity problems (LCPs) and a known result regarding the Lipschitz continuity of LCP solutions. An LCP with a given vector \(u \in \mathbb{R}^n\) and a matrix \(M \in \mathbb{R}^{n \times n}\), denoted by \(\mathrm{LCP}(u, M)\), seeks to find an \(x \in \mathbb{R}^n\) such that \(0 \leq x \perp u + Mx \geq 0\), where the symbol \(\perp\) denotes orthogonality; that is, $x^T(u + Mx) = 0$.

\begin{theorem} \label{thm:Lip_LCP}
	(Theorem 3.2 in \cite{Mangasarian_LipCont} -- Lipschitz continuity of uniquely solvable LCPs).  
	Let $u^1$ and $u^2$ be points in $\Re^n$ such that the $\mathrm{LCP}(u(\tau),M)$ with $u(\tau):= (1 - \tau)u^1 + \tau u^2$ has a unique solution for each $\tau \in [0,1]$. Then the unique solutions $x^1$ of the LCP $(u^1, M)$ and $x^2$ of $(u^2, M)$ satisfy $||x^1 - x^2||_{\infty} \leq \sigma_{\beta}(M)||u^1 - u^2||_{\beta}$, where $\sigma_{\beta}(M)$ is some constant derived from the matrix $M$.    
\end{theorem}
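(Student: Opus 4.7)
My plan is to exploit the fact that, under unique solvability along the parametric path, the solution map $u \mapsto x(u)$ of $\mathrm{LCP}(u,M)$ is piecewise linear, and to integrate increments of $x$ along $u(\tau)$ using explicit closed-form representations in terms of principal submatrices of $M$.

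As a first step, I would characterize each solution $x$ of $\mathrm{LCP}(u,M)$ through its active index set $I(x) := \{i : x_i > 0\}$ and complementary set $J(x) = \{1,\dots,n\} \setminus I(x)$. Complementarity forces $x_J = 0$ and $(Mx+u)_I = 0$, so on the active block $M_{II} x_I = -u_I$. Uniqueness of the solution at every point $u(\tau)$ forces each such principal submatrix $M_{II}$ that arises along the path to be nonsingular: if $M_{II}$ had nontrivial kernel, one could perturb $x_I$ along $\ker M_{II}$, keeping the solution nonnegative and complementary, to produce a second solution at $u(\tau)$. This yields the closed form $x_I = -M_{II}^{-1} u_I$, $x_J = 0$. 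I would then define $\sigma_\beta(M) := \max_{I} \|\widehat{M}_I^{-1}\|_{\beta,\infty}$, where $\widehat{M}_I^{-1}$ is the $n\times n$ matrix that equals $M_{II}^{-1}$ on the $I$-block and zero elsewhere, and the max ranges over all $I \subseteq \{1,\dots,n\}$ with $M_{II}$ nonsingular. Finiteness is immediate because there are only $2^n$ candidate subsets.

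Next, I would show $\tau \mapsto x(\tau)$ is continuous on $[0,1]$ by a standard closed-graph argument: if $\tau_k \to \tau$ and $x(\tau_k) \to x^*$, passing to the limit in the nonnegativity and complementarity conditions shows $x^*$ solves $\mathrm{LCP}(u(\tau),M)$, whence $x^* = x(\tau)$ by the uniqueness hypothesis. Combined with the local closed-form expression and continuity of the affine path $u(\tau)$, this allows me to partition $[0,1]$ into finitely many closed subintervals $0 = \tau_0 < \tau_1 < \cdots < \tau_K = 1$ on whose interiors the active set $I_k$ is constant; on each subinterval $x(\tau)$ is the affine image of $u(\tau)$ through $-\widehat{M}_{I_k}^{-1}$, so $\|x(\tau_{k+1}) - x(\tau_k)\|_\infty \le \sigma_\beta(M)\,\|u(\tau_{k+1}) - u(\tau_k)\|_\beta$. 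Telescoping via the triangle inequality and using linearity of $u(\tau)$ to collapse $\sum_k \|u(\tau_{k+1}) - u(\tau_k)\|_\beta = \|u^2 - u^1\|_\beta$ then yields the bound $\|x^1 - x^2\|_\infty \le \sigma_\beta(M)\,\|u^1 - u^2\|_\beta$ claimed in the theorem.

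The main obstacle will be making the piecewise-linearity step fully rigorous: I need to rule out pathological behavior in which the active set oscillates infinitely often along the path at degenerate points (where simultaneously $x_i(\tau) = 0$ and $(Mx(\tau)+u(\tau))_i = 0$). To handle this cleanly, I would appeal to Robinson's normal-map reformulation, writing $x = \Pi_{\mathbb{R}^n_+}(z)$ for an auxiliary variable $z$ satisfying a piecewise-affine equation; the orthant-induced cellular decomposition of $\mathbb{R}^n$ partitions the range of $u$ into finitely many polyhedral cells on which the solution map is a single affine function, and a finite subdivision of $[0,1]$ inherits from this structure. If that geometric route proves too heavy, a cleaner alternative is to invoke the $P$-matrix-type characterization of unique solvability (or its local version) together with the standard Lipschitz formula for affine variational inequalities, and then identify the Lipschitz constant with $\sigma_\beta(M)$ by the submatrix calculation above.
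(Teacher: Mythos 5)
First, a point of context: the paper does not prove this statement at all --- it is quoted verbatim as Theorem~3.2 of \cite{Mangasarian_LipCont} and used as a black box in the proof of Proposition~\ref{prop:LMP_LipCont}. So the only meaningful comparison is against the original Mangasarian--Shiau argument, whose skeleton (decompose the solution set by complementarity index sets, bound each piece, chain along the segment) your proposal correctly reproduces.

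There is, however, a genuine gap at the core of your version: the claim that uniqueness of the solution at $u(\tau)$ forces the principal submatrix $M_{II}$ on the active block to be nonsingular. Your perturbation argument only controls the $I$-block. If you move $x_I$ along a nonzero $v\in\ker M_{II}$, you preserve $x_I\ge 0$ and $(Mx+u)_I=0$, but you also change $(Mx+u)_J = M_{JI}x_I + u_J$, and at a degenerate solution (some $j\notin I$ with $(Mx+u)_j=0$) the perturbed point can violate $(Mx+u)_J\ge 0$ in both directions $\pm\epsilon v$. Hence the solution can be unique while $M_{II}$ is singular, the closed form $x_I=-M_{II}^{-1}u_I$ is unavailable, and your constant $\sigma_\beta(M)=\max_I\|\widehat M_I^{-1}\|_{\beta,\infty}$ is not defined on exactly the index sets you need. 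The original proof avoids this by treating each piece $S_I(u)=\{x: x_J=0,\ x_I\ge 0,\ (Mx+u)_I=0,\ (Mx+u)_J\ge 0\}$ as the solution set of a system of linear inequalities and invoking a Hoffman-type Lipschitz bound for such systems (their Theorem~2.2), with $\sigma_\beta(M)$ the maximum of the resulting Hoffman constants over the finitely many index sets; no invertibility is needed. Two smaller gaps: your closed-graph continuity argument presupposes that $x(\tau_k)$ has a convergent subsequence, i.e.\ local boundedness of the solution along the path, which itself requires the error-bound machinery; and the finiteness of the subdivision $0=\tau_0<\cdots<\tau_K=1$ should come from the observation that, for each $I$, the set of $\tau$ with $x(\tau)\in S_I(u(\tau))$ is the projection of a polyhedron onto the $\tau$-axis and hence a closed interval --- your fallback via Robinson's normal map needs global unique solvability (coherent orientation), which is strictly stronger than the segment-wise uniqueness hypothesized here. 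The telescoping step itself, using affinity of $u(\tau)$ to collapse $\sum_k\|u(\tau_{k+1})-u(\tau_k)\|_\beta=\|u^1-u^2\|_\beta$, is fine.
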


\noindent\textbf{Proof of Proposition \ref{prop:LMP_LipCont}.} 
Since the LMPs are determined by the dual variables of supply and demand balancing constraint and the transmission line constraints, as given in \eqref{LMP}, under LICQ, the 
dual variables are unique, and hence, $P^n(\mathbf{B}_t)$ is single-valued with a given $\mathbf{B}_t \in \mathcal{F}_B.$

To utilize Theorem \ref{thm:Lip_LCP} to prove Lipschitz continuity of the LMPs with respect to energy demand, we write down the first-order optimality conditions (aka the KKT conditions) of the ED problem \eqref{obj1} -- \eqref{const3} at a given time $t$, with the quadratic cost function defined in \ref{prop:LMP_LipCont}:
\begin{align*}
0 \leq g_{t}^n & \perp  \alpha^n g_{t}^n + \beta^n - \lambda + \sum_{l=1}^L PTDF_{l}^n(\bar{\mu}_l - \underline{\mu}_l) + \bar{\eta}^n \geq 0 \\
0  \leq  \lambda \ & \perp  \sum_{n=1}^N g_{t}^n - \mathbf{1}^T  \mathbf{B}_t \geq 0 \\
0 \leq \bar{\mu}_l & \perp \widehat{F}_l - \sum_{n=1}^N PTDF_{l,n} (g_{t}^n - B_t^n) \geq 0 \\ 
0 \leq \underline{\mu}_l & \perp \widehat{F}_l + \sum_{n=1}^N PTDF_{l,n} (g_{t}^n - B_t^n) \geq 0\\
0 \leq \bar{\eta}_n & \perp \widehat{G}_n - g_t^n \geq 0.  
\end{align*}
Since the objective function in \eqref{obj1} is assumed to be convex quadratic, and the constraints are all linear (and hence the linear constraint qualification holds everywhere), the KKT condition is a necessary and sufficient optimality condition. 
Let $\mathbf{g}_t , \bm{\bar{\mu}}, \bm{\underline{\mu}}, \bm{\bar{\eta}}, \bm{\alpha}, \bm{\beta}, \bm{\widehat{F}},$ and $\bm{\widehat{G}}$ represent vectors containing collections of their corresponding elements. Furthermore, let $PTDF \in \Re^{L \times N}$ be the matrix whose $l$-th row and $n$-th column element is $PTDF_{l}^n$.
Furthermore, let $PTDF \in \Re^{L \times N}$ be the matrix whose $l$-th row and $n$-th column is $PTDF_{l}^n$, and $\Lambda = \mathrm{Diag}(\bm{\alpha}) \in \Re^{N \times N}$ be a diagonal matrix with diagonal entries being the elements of the vector $\bm{\alpha}$. We can write the KKT conditions into the following LCP form:
\begin{align*}
&0\leq \begin{pmatrix}
\mathbf{g}_t \\
\lambda\\
\bm{\bar{\mu}}\\
\bm{\underline{\mu}}\\
\bm{\bar{\eta}}
\end{pmatrix} \perp 
\begin{pmatrix}
\bm{\beta}\\
- \mathbf{B}^t \\
\bm{\widehat{F}} + PTDF \times \mathbf{B}^t\\
\bm{\widehat{F}} - PTDF \times \mathbf{B}^t\\
\bm{\widehat{G}}\\
\end{pmatrix} + \begin{bmatrix}
\Lambda & - \mathbf{1} & PTDF & - PTDF & I \\
\mathbf{1}^T & 0 & 0 & 0 & 0 \\
-PTDF & 0 & 0 & 0 & 0 \\
PTDF & 0 & 0 & 0 & 0 \\
- I & 0 & 0 & 0 & 0 \\
\end{bmatrix} 
\begin{pmatrix}
\mathbf{g}_t \\
\lambda\\
\bm{\bar{\mu}}\\
\bm{\underline{\mu}}\\
\bm{\bar{\eta}}
\end{pmatrix} \geq 0, 
\end{align*}
where $\mathbf{1}$ denotes a vector of all 1's, $I$ denotes the identity matrix, and $0$ represents either a vector or a matrix, all of the appropriate dimensions. 
Let $\mathbf{x}$ denote the collection of all variables in the above LCP, 
$\mathbf{u(B^t)}$ represent the constant vector, and $M$ be the big matrix.  
Then, the LCP above can be written in the following condensed form:
\begin{equation}\label{eq:LCP_dense}
0 \leq \mathbf{x} \perp \mathbf{u(B^t)} + M\mathbf{x} \geq 0. 
\end{equation}
Under the assumptions of a strongly convex objective function and Assumption \ref{assump:LICQ}, for a given $\mathbf{B}^t$, the optimal primal and dual solutions are unique, and hence, the LCP \eqref{eq:LCP_dense} also has a unique solution. Consequently, Theorem \ref{thm:Lip_LCP} applies here, and since $\mathbf{u(B^t)}$ is a linear function with respect to $B^t$, it is straightforward to derive the LMPs, $P^n(\mathbf{B^t})$ as defined in \eqref{LMP}, are Lipschitz continuous with respect to $\mathbf{B^t}$ for $n = 1, \ldots, N$. 
\hfill$\Box$

\subsection{Proof of single-valuedness of a prosumer's optimal policy} \label{subsec:PolicyUnique}
To facilitate the derivation of theoretical results that follow, we need to endow $\mathcal{P}(\Xi)$ with the weak topology through the concept of weak convergence as follows.

\begin{definition}
	\label{def:weak}
	(Weak convergence \citep{guide2006infinite}) 
	We say that a sequence of measures $\{p_n\} \in \mathcal{P}(\Xi)$ converges weakly to $p \in \mathcal{P}(\Xi)$ if, for all bounded and continuous functions $f:\Xi \to \mathbb{R}$, we have
	\begin{align*}
	\lim _{n \to \infty}\int_{\Xi} f(x) \, p_n(dx) = \int_{\Xi} f(x) \, p(dx).
	\end{align*}
\end{definition}

To prove Proposition \ref{prop:unique}, we need to first show that the expectation of the LMPs is continuous with respect to the population \(p^{\infty}_{t}\) at any location and at any time period \(t\).

\begin{lemma}
	Under Assumption \ref{assump:LICQ} and the condition that at time $t$, the random noise of individual agent's demand \(\zeta^{\theta}_{i,t}\), as defined in \eqref{eq:q_decompose}, is i.i.d. 
    the expected value of the LMP at each node \(n = 1, \ldots, N\) at time \(t\), as defined in Eq. (\ref{LMP}), is a continuous function of the population profile \(p^{\infty}_{t}\) with respect to weak convergence, as defined in Definition \ref{def:weak}.
	\label{lemma:ELMP}
\end{lemma}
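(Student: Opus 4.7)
The plan is to express the expected LMP as the composition of two maps: (i) the map $p^{\infty}_t \mapsto \mathbf{B}_t(\bm{\omega}_t, p^{\infty}_t)$ from the population profile to the vector of aggregate nodal demands (still random due to the common weather shocks), and (ii) the map $\mathbf{B}_t \mapsto P^n(\mathbf{B}_t)$, which is Lipschitz by Proposition \ref{prop:LMP_LipCont}. Continuity of $\overline{P}^n_t$ in $p^{\infty}_t$ then follows by composition and an application of the dominated convergence theorem to integrate out the weather noise.

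First I would write out the mean-field limit of the aggregate nodal demand. Using $\overline{e}^{\theta} = \overline{C}^{\theta}/I^{\theta}$, the empirical aggregate at node $n$ factors as
$\sum_{\theta:\,n(\theta)=n} \overline{C}^{\theta} \cdot (1/I^{\theta})\sum_{i=1}^{I^{\theta}} \tilde{b}^{\theta}(e_{i,t}, a_{i,t}, q^{\theta}_{i,t})$,
where $\tilde{b}^{\theta}(e,a,q) := b^{\theta}(e,a,q)/\overline{e}^{\theta}$ is jointly bounded and continuous in $(e,a)$ (this follows from Remarks 1 and 2 together with the explicit form of \eqref{eq:bids}, since $\eta(\cdot)$ and the $\max/\min$ operations are jointly continuous and the two branches agree at $a=0$). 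Conditioning on the common weather shock $\omega^{\theta}_t$ and invoking the strong law of large numbers for the i.i.d. individual noise $\zeta^{\theta}_{i,t}$, the mean-field limit of the aggregate demand at node $n$ is
\begin{equation*}
B^n_t(\bm{\omega}_t, p^{\infty}_t) = \sum_{\theta:\,n(\theta)=n} \overline{C}^{\theta} \int_{\mathcal{S}\times\mathcal{A}} f^{\theta}_{\omega^{\theta}_t}(e,a)\, dp^{\infty,\theta}_t(s,a),
\end{equation*}
where $f^{\theta}_{\omega}(e,a) := \mathbb{E}_{\zeta^{\theta}}[\tilde{b}^{\theta}(e, a, \omega + \zeta^{\theta})]$.

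Next I would verify that, for each fixed $\omega$, $f^{\theta}_{\omega}$ is a bounded continuous function of $(e,a)$. Boundedness is immediate from the compact supports of $e$, $a$, $\omega$ and $\zeta^{\theta}$; continuity follows from continuity of $\tilde{b}^{\theta}$ in $(e,a)$ combined with the dominated convergence theorem applied to the inner expectation over $\zeta^{\theta}$. By the definition of weak convergence, if $p^{\infty,\theta}_{t,k} \Rightarrow p^{\infty,\theta}_t$, then $\int f^{\theta}_{\omega}\, dp^{\infty,\theta}_{t,k} \to \int f^{\theta}_{\omega}\, dp^{\infty,\theta}_t$ for every $\omega$. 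Hence $p^{\infty}_t \mapsto \mathbf{B}_t(\bm{\omega}_t, p^{\infty}_t)$ is continuous for each realization of $\bm{\omega}_t$, and composition with the Lipschitz map $P^n$ from Proposition \ref{prop:LMP_LipCont} gives continuity of $p^{\infty}_t \mapsto P^n(\mathbf{B}_t(\bm{\omega}_t, p^{\infty}_t))$ pointwise in $\bm{\omega}_t$. Since the LMPs are uniformly bounded by Remark 1, a final invocation of dominated convergence -- this time over the weather distribution -- yields continuity of $\overline{P}^n_t(p^{\infty}_t) = \mathbb{E}_{\bm{\omega}_t}[P^n(\mathbf{B}_t(\bm{\omega}_t, p^{\infty}_t))]$ in the weak topology.

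The main obstacle is the careful bookkeeping required to separate the two sources of randomness: the common weather shocks $\omega^{\theta}_t$, which survive in the mean-field limit and determine the distribution of $\mathbf{B}_t$, versus the individual i.i.d. noises $\zeta^{\theta}_{i,t}$, which are averaged out by the law of large numbers. A subsidiary concern is confirming that $\mathbf{B}_t(\bm{\omega}_t, p^{\infty}_t) \in \mathcal{F}_B$ almost surely, so that Proposition \ref{prop:LMP_LipCont} applies throughout the support of $\bm{\omega}_t$; this should follow from the blanket positivity assumption on total net demand together with the compactness of the supports of all underlying shocks.
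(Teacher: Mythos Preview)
Your proof is correct and follows essentially the same approach as the paper: both express the aggregate nodal demand as a weather-conditional integral of a bounded continuous function against the population profile, use the definition of weak convergence to pass the limit through that integral, and then compose with the Lipschitz map $P^n$ from Proposition~\ref{prop:LMP_LipCont}. The only cosmetic difference is that the paper explicitly splits the bid into the net-load term $q^{\theta}_{i,t}\bar{e}^{\theta}$ and the storage term $v^{\theta}(e,a)$ before applying the SLLN, whereas you keep them bundled in $f^{\theta}_{\omega}$; you are also more explicit than the paper about invoking dominated convergence for the outer expectation over $\bm{\omega}_t$.
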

\begin{proof}{Proof} As in Eq. (\ref{LMP}), the LMPs at each node $n$ are a function of the aggregated demand bids at all locations. Based on the definition of the bids in \eqref{eq:bids}, when $I^{\theta} \to \infty$ for all \(\theta \in \Theta\), since the total capacity of all type $\theta$ agents is assumed to be capped at $\overline{C}^{\theta}$, each individual agent's bid becomes infinitesimal, and the aggregate bids remain finite. We first characterize such aggregate bids using the Strong Law of Large Numbers (SLLN).  

To sum over all the bids, for ease of notation, we use a function,  $v^{\theta}_{i,t}(e_{i,t}, a_{i,t})$, to denote the second part of an agent's bid in \eqref{eq:bids}: 
\begin{align}\label{eq:v_function}
& v^{\theta}_{i,t}(e_{i,t}, a_{i,t}):= \left\{
\begin{aligned}
& \eta(a_{i,t}) \cdot \max \big\{- e_{i,t},\ a_{i,t} \big\}, \    a_{i,t} < 0,\\
&  \frac{\min \big\{1-e_{i,t},\ a_{i,t}\big\}}{\eta(a_{i,t})}, \ a_{i,t} \geq 0. 
\end{aligned}
\right.
\end{align}
Since both the state and action are random variables (due to the exogenous uncertainties in each agent's reward functions), whose joint distribution is exactly the population profile $p_t^{\infty,\theta}$ when $I^{\theta} \to \infty$,  $v^{\theta}_{i,t}(e_{i,t}, a_{i,t})$ is also a random variable. 
Since within the same type, all agents use the same optimal policy and are subject to the same weather conditions, we can assume that the series $\{v^{\theta}_{i,t}\}_{i=1}^{\infty}$ is i.i.d. By multiplying $\bar{e} = \overline{C}^{\theta}/{I^{\theta}}$ (to obtain the actual energy bids considering battery charging/discharging, as defined in the bid formulation \eqref{eq:bids} and by applying the SLLN, we have that 
\begin{equation}\label{eq:sum_v}
\begin{array}{l}
\displaystyle
\lim_{I^{\theta}\to \infty}\sum_{i=1}^{I^{\theta}} \left[v^{\theta}_{i,t}(e_{i,t}, a_{i,t}) \times   \bar{e}\right] = \lim_{I^{\theta}\to \infty}\Bigg[\frac{\displaystyle\sum_{i=1}^{I^{\theta}} v^{\theta}_{i,t}(e_{i,t},a_{i,t})}{I^{\theta}} \Bigg]\overline{C}^{\theta}  =  \ \overline{C}^{\theta}\int_{\mathcal{E}\times\mathcal{A}} v^{\theta}_{i,t}(e_{i,t},a_{i,t})dp_t^{\infty, \theta},  
\end{array}
\end{equation}
where the integration in the last equation represents the expected value of $v^{\theta}_{i,t}(e_{i,t},a_{i,t})$.\footnote{Note that in \eqref{eq:sum_q}, when \(I^{\theta} \to \infty\), it does not imply that the agents' actions (and states) correspond to a finite-agent game with \(I^{\theta}\) agents. Instead, the agents' actions are derived from the optimal policy in the setting where the number of agents is already infinite. The limit in \eqref{eq:sum_q} simply represents the partial sum of an infinite series.}

For the other part of an agent's bid,  $q^{\theta}_{i,t}\bar{e}$, we have that 
\begin{align}
& \lim_{I^{\theta}\to \infty}\sum_{i=1}^{I^{\theta}}  q^{\theta}_{i,t}\bar{e} = \lim_{I^{\theta}\to \infty}\sum_{i=1}^{I^{\theta}}\left(\omega_t^{\theta} + \zeta_{i,t}^{\theta}\right)\frac{\overline{C}^{\theta}}{I^{\theta}} = \left(\omega^{\theta}_t + \lim_{I^{\theta}\to \infty}\frac{\sum_{i=1}^{I^{\theta}}\zeta_{i,t}^{\theta}}{I^{\theta}}\right)\overline{C}^{\theta} = 
\left(\omega^{\theta}_t + \bar{\zeta}_{t}^{\theta} \right)\overline{C}^{\theta}, \label{eq:sum_q}
\end{align}
where the second equality holds because the random variable \(\omega_t^{\theta}\) represents weather-related uncertainties and does not depend on the agents (hence, no agent subindex \(i\)). The last equality directly follows from the SLLN.

By \eqref{eq:sum_v} and \eqref{eq:sum_q}, with a given population profile $p_t^{\infty,\theta}$, we can write out the aggregate bids of type $\theta$ as follows: 
\begin{align}\label{eq:agg_bids}
\begin{split}
B^{\infty,\theta}_{t} := & \sum_{i=1}^{\infty}b_{i,t}^{\theta}(e_{i,t},a_{i,t},q^{\theta}_{i,t})  \\
= &  \lim_{I^{\theta}\to \infty}\sum_{i=1}^{I^{\theta}} \left[v^{\theta}_{i,t}(e_{i,t}, a_{i,t}) +\omega_t^{\theta} + \zeta_{i,t}^{\theta}\right] \times   \bar{e} \\
= & \ \overline{C}^{\theta}\left(\int_{\mathcal{E}\times\mathcal{A}} v^{\theta}_{i,t}(e_{i,t},a_{i,t})dp_t^{\infty, \theta} +\omega^{\theta}_t + \bar{\zeta}_{t}^{\theta} \right).
\end{split}
\end{align}
It can be seen that the aggregate bids for each type remain random variables due to the presence of the weather-related random variable \(\omega^{\theta}\). Let \(\rho^{\omega}\) denote the joint probability distribution of \(\omega^{\theta}\) for all \(\theta \in \Theta\), and assume the joint distribution has a compact support \(\Omega\). Using the formulation in \eqref{LMP}, the expected value of the LMP at time \(t\) at node \(n = 1, \ldots, N\) can be written as:
\begin{align}\label{eq:ELMP}
\begin{split}
&\mathbb{E}[LMP^n_t] = \int_{\Omega} P^n\bigg(B^{\infty,1}_{t},\cdots, B^{\infty,N}_{t}\bigg) \, \rho^{\omega}(d\omega).\\
\end{split}
\end{align}

Let \(\{p^{\infty,\theta}_{t,k}\}_{k=1}^{\infty}\) be a sequence of population measures that weakly converge to \(\{p^{\infty,\theta}_{t}\}\). By its definition in \eqref{eq:v_function}, the function \(v_{i,t}^{\theta}(e_{i,t}, a_{i,t})\) is bounded and continuous. Hence, by Definition \ref{def:weak}, we have
\begin{align*}
\lim_{k\to \infty} \int_{\mathcal{E}\times\mathcal{A}} v^{\theta}_{i,t}(e_{i,t}, a_{i,t}) \, dp^{\infty, \theta}_{t, k}  = \ \int_{\mathcal{E}\times\mathcal{A}} v^{\theta}_{i,t}(e_{i,t}, a_{i,t}) \, dp_t^{\infty, \theta}.
\end{align*}
As a result, by \eqref{eq:agg_bids} and \eqref{eq:ELMP}, \(\mathbb{E}[LMP^n_t]\) is continuous with respect to the population profile \(\{p^{\infty,\theta}_{t}\}\) since the LMP, \(P^n\), is Lipschitz continuous with respect to the aggregated bids under Assumption \ref{assump:LICQ}. 
\end{proof}

\noindent \textbf{Proof of Proposition \ref{prop:unique}.}
	The non-emptiness of the mapping in \eqref{eq:OptPolicy} follows from the existence of a stationary optimal policy, a well-established result in dynamic programming, as mentioned earlier. Therefore, we omit the proof and proceed to show that the objective function in the Bellman equation \eqref{eq:Bellman} is strictly concave with respect to $a$. Together with the non-emptiness of the mapping, this implies the single-valuedness of the optimal policy.

	To do so, we want to simplify the bid function \eqref{eq:bids} by removing the outer `max' or `min' operator first. By writing out the bid function explicitly and restricting the action $a$ based on the current energy storage level $[-e, 1-e]$, we can equivalently re-write the Bellman equation as follows:
	\begin{align}
	& V^{\pi^{\theta^*}}(s, p^{\infty})  \nonumber \\
    = &\ \max_{a\in \mathcal{A}} \left\{\overbar{R}^{\theta}(s, a|p^{\infty}) + \beta  V^{\pi^{\theta^*}}\left[Tr(s,a),\ p^{\infty}\right]\right\} \nonumber \\
	= &   \max_{a\in [-e,\ 1-e]} \bigg\{\mathbb{E}_{q^{\theta}} \left[q^{\theta}\overline{e}^{\theta}\right] -\ {P}^{n(\theta)}(p^{\infty}) \cdot 
	\eta(a)\cdot\overline{e}^{\theta} \cdot \min(a, 0)  -\ {P}^{n(\theta)}(p^{\infty})  \cdot 
	\left(\overline{e}^{\theta}/\eta(a)\right) \cdot \max(a, 0) \nonumber \\
	& +\ \beta V^{\pi^{\theta^*}}\left[Tr(s,a), p^{\infty}\right]\bigg\}.  \label{eq:Bellman_Reform}
	\end{align}
	We want to show that $\overbar{R}^{\theta}(s, a|p^{\infty})$ is strictly concave with respect to $a$. Since the first term in \eqref{eq:Bellman_Reform}, $\mathbb{E}_{q^{\theta}} [q^{\theta}\overline{e}^{\theta}]$ is a constant, we only need to focus on the remaining terms. Additionally, for a given population profile $p^{\infty}$, the LMP ${P}^{n(\theta)}(p^{\infty})$ can also be treated as a constant for a given agent, which we simply denote it as ${P}$. Consider the following step-wise function:
	\begin{align}\label{eq:u_function}
	u^{\theta}(a)&:= \begin{cases}
	-\eta(a) \cdot a \cdot \Bar{e}^{\theta} \cdot {P} &\text{ if } a \in[-1,0],\\[3pt]
	- \displaystyle\frac{a}{\eta(a)}\cdot   \Bar{e}^{\theta} \cdot {P} &\text{ if } a \in (0,1],
	\end{cases} \\[5pt]
	& \ =  \begin{cases}
	-(\alpha_0 + \alpha_d \cdot a)  \cdot a \cdot \Bar{e}^{\theta} \cdot {P}&\text{ if } a \in[-1,0],\\[3pt]
	-  \displaystyle \frac{a}{\alpha_0 - \alpha_c \cdot a} \cdot \Bar{e}^{\theta}\cdot  {P} &\text{ if } a \in (0,1],
	\end{cases}  \label{eq:reward_no_e}
	\end{align}
	where $\alpha_0 \in (0, 1)$, $\alpha_c$,  and $\alpha_d > 0$ are the parameters in defining battery charging/discharging efficiency in \eqref{eq:eff}, with $\alpha_0 - \alpha_c > 0$ and $\alpha_0 - \alpha_d > 0$. 
	It is straightforward to see that $u^{\theta}(a)$ is strictly concave on either [-1, 0] or on (0, 1]. To show that $u(a)$ is strictly concave over the entire region [-1, 1], we construct two auxiliary functions $\overline{u}^{\theta}(a)$ and $\widetilde{u}^{\theta}(a)$ as follows:  
	\begin{align}
	& \overline{u}^{\theta}(a) := \begin{cases}
	-(\alpha_0 + \alpha_d \cdot a) \cdot a  \cdot \Bar{e}^{\theta} \cdot \widetilde{P}, &\ \text{ if } a\in[-1,0],\\[3pt]
	[(\frac{1}{2}\alpha_0-\frac{1}{2\alpha_0}) \cdot a^2 - \alpha_0  a] \cdot \Bar{e}^{\theta} \cdot \widetilde{P}, &\ \text{ if } a \in (0,1],
	\end{cases} \\[5pt]
	& \mathrm{and}  \nonumber \\[5pt]
	&\widetilde{u}^{\theta}(a)  :=  \begin{cases}
	[(\frac{1}{2}\alpha_0-\frac{1}{2\alpha_0}) \cdot a^2 - \frac{1}{\alpha_0}  a] \cdot \Bar{e}^{\theta} \cdot \widetilde{P},   &\ \text{ if } a \in[-1,0],\\[3pt]
	\displaystyle -  \frac{a}{\alpha_0 - \alpha_c \cdot a} \cdot \Bar{e}^{\theta} \cdot \widetilde{P},  &\ \text{ if } a_h  \in (0,1].
	\end{cases} 
	\end{align} 
	By taking the derivatives of the two functions, we get that 
	
	\begin{align}
	& \frac{d\overline{u}^{\theta}(a)}{da}  = \begin{cases}
	-(\alpha_0 + 2\alpha_d a) \cdot \Bar{e}^{\theta} \cdot \widetilde{P},&\ \text{ if } a\in[-1,0],\\[3pt]
	[(\alpha_0-\frac{1}{\alpha_0}) \cdot a - \alpha_0  ] \cdot \Bar{e}^{\theta} \cdot \widetilde{P},   &\ \text{ if } a \in (0,1],
	\end{cases} \\[5pt]
	& \mathrm{and}  \nonumber \\[5pt]
	&\frac{d \widetilde{u}^{\theta}(a)}{d a} =  \begin{cases}
	[(\alpha_0-\frac{1}{\alpha_0}) \cdot a - \frac{1}{\alpha_0} ] \cdot \Bar{e}^{\theta} \cdot \widetilde{P}, &\ \text{ if } a \in[-1,0],\\[5pt]
	\displaystyle -  \frac{\alpha_0}{(\alpha_0 - \alpha_c \cdot a)^2} \cdot \Bar{e}^{\theta} \cdot \widetilde{P}, &\ \text{ if } a  \in (0,1].
	\end{cases} 
	\end{align} 
	Note that both functions are differentiable over the entire range of \([-1,1]\), as the left and right derivatives at \(a = 0\) are equal for both functions. For $\overline{u}^{\theta}(a)$, when $a\in [-1,0]$, clearly $d\overline{u}^{\theta}(a)/da$ is a strictly decreasing function since $\alpha_d$, $\Bar{e}^{\theta}$, and ${P}$ are all positive. 
	When $a\in (0,1]$, since $\alpha_0 \in (0, 1)$, then $d\overline{u}^{\theta}(a)/da$ is also a strictly decreasing function. Hence, $d\overline{u}^{\theta}(a)/da$ is strictly decreasing over $[-1, 1]$. By the well-known result for univariate functions (see Theorem 1.4 in \cite{ConvexFunctions}), $\overline{u}^{\theta}(a)$ is strictly concave on $[-1, 1]$. Similarly, we can show that $\widetilde{u}^{\theta}(a)$ is also strictly concave on $[-1, 1]$. 
	
	By the way of constructing $\overline{u}^{\theta}$ and $\widetilde{u}^{\theta}$, it is easy to see that 
	$u^{\theta} (a) = \min \{\overline{u}^{\theta}(a), \tilde{u}^{\theta}(a)\}$. Hence, $u^{\theta} (a)$ is strictly concave on $[-1, 1]$. Next, we show that the optimal value function $V^{\pi^{\theta^*}}(Tr(s,a),p^{\infty})$ is also strictly concave in $a$.
	
	Let $\overline{\mathcal{J}}(\mathcal{E} \times \mathcal{H} \times \mathcal{P}(\Xi)^{|\Theta|})$ denote the space of all bounded functions on $\mathcal{E} \times \mathcal{H} \times \mathcal{P}(\Xi)^{|\Theta|}$, where $\mathcal{E} = [0, 1]$ is the range of the energy storage state of charge, $\mathcal{H}$ is the discrete set of all times of day, and $\mathcal{P}(\Xi)^{|\Theta|}$ is the space of possible distributions of population profile $p^{\infty}$.  For a function $J^{\theta}(s, p^{\infty}) \in \overline{J}$ that is jointly continuous, define the Bellman operator $T: \overline{\mathcal{J}} \to \overline{\mathcal{J}}$  as follows: 
	\begin{align}
	TJ^{\theta}(s, p^{\infty}) = \ \max_{a\in\mathcal{A}}\overbar{R}^{\theta}(s, a|p^{\infty}) + \beta  J^{\theta}\left(Tr(s,a), p^{\infty}\right). \label{eq:BellmanOperator}
	\end{align}
	Although the state variable includes both the state of charge and the time of day, we can focus solely on the state of charge, as the time of day transition is discrete and deterministic, and it will not affect any of the discussion that follows. 
	To simplify the transition function of the state of charge \eqref{eq:StateTransition}, we can let the feasible action space depend on the current state of charge, that is $a \in [-e, 1-e]$, then the state transition function \eqref{eq:StateTransition} becomes $E(s, a)= e + a$. 
	Let $J^{\theta}$ be any continuous function on $\overline{J}$ and concave with respect to $s$, then  $J^{\theta}(E(s,a),p^{\infty})$ is also concave with respect $a$ since $E(s,a)$ is a linear function in $s$ and $a$. Now define the Bellman operator corresponding to the modified Bellman equation \eqref{eq:Bellman_Reform} as follows: 
	\begin{align}
	& TJ^{\theta}(s, p^{\infty})  = \ \max_{a\in [-e, \ 1-e]}\overbar{R}^{\theta}(s, a|p^{\infty}) + \beta  J^{\theta}\left(E(s,a), p^{\infty}\right). \label{eq:BellmanOperator_Reform}
	\end{align}
	By reformulating the reward function as in \eqref{eq:Bellman_Reform} and expressing its explicit form in \eqref{eq:reward_no_e}, the reward function $\overbar{R}^{\theta}(s, a|p^{\infty})$ does not explicitly depend on the state variable $s$. Since we have shown that it is concave in $a$, the term $\overbar{R}^{\theta}(s, a|p^{\infty}) + \beta  J^{\theta}\left(E(s,a), p^{\infty}\right)$ is jointly concave in $(s,a)$. Additionally, the feasible region \( a \in \mathcal{A}(e) \equiv [-e, 1-e] \), considered as a point-to-set mapping, is hull concave over the percentage interval \( \mathcal{E} = [0, 1] \), meaning that the convex hull of $\mathcal{A}(e)$ is a concave mapping over \( \mathcal{E} \).
	By a well-known result on the concavity of optimal value functions (see Proposition 3.2 in \cite{fiacco1986convexity}), $TJ^{\theta}$ is concave in $s$ for a fixed $p^{\infty}$. Consequently, the operator $T$ preserves concavity, and $T^kJ^{\theta}$ remains concave in $s$ for all $k= 1, 2, \ldots .$
	Furthermore, by the standard result from dynamic programming, the Bellman operator is a contraction mapping, ensuring that $T^k J^{\theta}$ converges uniformly to $V^{\pi^{\theta^*}}$ (see \cite{bertsekas1996stochastic}). Therefore, by a known result in convex analysis stating that the pointwise limit of a sequence of convex functions is also convex (Theorem 10.8 in \cite{Rockafellar}), \( V^{\pi^{\theta^*}}(s, p^{\infty}) \) is concave with respect to \( s \), implying that $V^{\pi^{\theta^*}}[Tr(s,a), p^{\infty}]$ is concave with respect to $a$. Together with the strict concavity of the function $u^{\theta}(a)$ in \eqref{eq:u_function} (and thus the strict concavity of $\overbar{R}^{\theta}(s, a|p^{\infty})$ in $a$), the `argmax' mapping in \eqref{eq:OptPolicy} must be a singleton.
	
	To show that the optimal policy mapping is continuous in \( (s, p^{\infty}) \), we again rely on the Bellman operator in \eqref{eq:BellmanOperator} with an arbitrary continuous function \( J^{\theta} \in \overline{J} \). The one-stage reward function \( \overbar{R}^{\theta}(s, a \mid p^{\infty}) \) is the product of the LMP and bid quantity. By Proposition \ref{prop:LMP_LipCont}, the LMP is Lipschitz continuous with respect to \( p^{\infty} \). Since the bid function is jointly continuous in \( (s,a) \) (as can be seen in \eqref{eq:bids}), the reward function is jointly continuous in \( [(s, p^{\infty}), a] \) in light of Lemma \ref{lemma:ELMP}. Furthermore, the transition function \( Tr(s, a) \), as defined in \eqref{eq:StateTransition}, is also jointly continuous in \( (s, a) \), making \( J^{\theta}(Tr(s,a), p^{\infty}) \) jointly continuous as well, given that \( J^{\theta} \) is a continuous function.
	
	With the feasible action space \( \mathcal{A} \) being compact, the Berge Maximum Theorem (Theorem 17.31 in \cite{guide2006infinite} or Lemma 6.11.8 in \cite{Puterman_MDP}) ensures that the optimal value function \( TJ^{\theta}(s, p^{\infty}) \) is continuous in \( (s, p^{\infty}) \). Since \( T^k J^{\theta} \) converges uniformly to \( V^{\pi^{\theta^*}} \), the uniform limit theorem (Theorem 21.6 in \cite{Topology}) guarantees that \( V^{\pi^{\theta^*}}(s, p^{\infty}) \) is jointly continuous. Finally, by the Berge Maximum Theorem again (or Lemma 6.11.9 in \cite{Puterman_MDP}), the unique `argmax' in \eqref{eq:OptPolicy} is continuous in \( (s, p^{\infty}) \). 

\subsection{Proof of MFE existence}\label{subsec:Proof_MFEExist}
As stated in the main text, proving the existence of an MFE in our context requires the Schauder-Tychonoff Fixed Point Theorem, stated below

\begin{proposition}
	(Schauder-Tychonoff Fixed Point Theorem, Corollary 17.56, \cite{guide2006infinite}) Let $X$ be a nonempty, compact, convex subset of a locally convex Hausdorff space, and let $f: X \rightarrow X$ be a continuous function. Then the set of fixed points of $f$ is compact and nonempty.
\end{proposition}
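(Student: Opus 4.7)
The plan is to reduce the statement to Brouwer's fixed point theorem via a finite-dimensional approximation device known as the Schauder projection, and then pass to the limit using compactness of $X$ in the locally convex topology. Let $\mathcal{P}$ denote a separating family of continuous seminorms generating the topology; Hausdorffness ensures that $p(y)=0$ for every $p\in\mathcal{P}$ implies $y=0$, so two points in $X$ can be compared seminorm-by-seminorm.

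First I would fix a seminorm $p\in\mathcal{P}$ and an $\varepsilon>0$. Compactness of $X$ gives a finite subcover by open balls $x_i + B_p(0,\varepsilon)$ with centers $x_1,\dots,x_n\in X$. Define the bump functions $\phi_i(x) := \max\{0,\ \varepsilon - p(x-x_i)\}$ and the Schauder projection
\begin{equation*}
\pi_{p,\varepsilon}(x) \;=\; \frac{\sum_{i=1}^{n}\phi_i(x)\,x_i}{\sum_{i=1}^{n}\phi_i(x)},
\end{equation*}
which is continuous on $X$ because the denominator is strictly positive on $X$. Convexity of $X$ is used crucially here: $\pi_{p,\varepsilon}$ maps $X$ into the finite-dimensional compact convex polytope $K := \mathrm{conv}(x_1,\dots,x_n)\subset X$. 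By construction, $p(\pi_{p,\varepsilon}(x)-x) < \varepsilon$ for every $x\in X$.

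Next, the composition $g_{p,\varepsilon} := \pi_{p,\varepsilon}\circ f$ is a continuous self-map of the finite-dimensional compact convex set $K$, so Brouwer's fixed point theorem produces a point $x_{p,\varepsilon}\in K\subset X$ with $x_{p,\varepsilon} = \pi_{p,\varepsilon}(f(x_{p,\varepsilon}))$. Combined with the Schauder projection estimate, this yields the approximate fixed point inequality $p(f(x_{p,\varepsilon})-x_{p,\varepsilon})<\varepsilon$. Indexing $(p,\varepsilon)$ by the directed set whose order makes the seminorm families larger and $\varepsilon$ smaller, the collection $\{x_{p,\varepsilon}\}$ forms a net in the compact space $X$, hence has a subnet converging to some $x^{*}\in X$. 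Continuity of $f$ delivers $f(x_{p_\alpha,\varepsilon_\alpha})\to f(x^{*})$, and for any fixed $q\in\mathcal{P}$, eventually in the directed set $q(f(x_{p_\alpha,\varepsilon_\alpha})-x_{p_\alpha,\varepsilon_\alpha})<\varepsilon_\alpha\to 0$. Taking the limit gives $q(f(x^{*})-x^{*})=0$ for every $q\in\mathcal{P}$, and the Hausdorff property forces $f(x^{*})=x^{*}$, establishing nonemptiness.

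For compactness of the fixed-point set, I would note that $x\mapsto f(x)-x$ is continuous into a Hausdorff topological vector space, so its zero set is closed in $X$; a closed subset of the compact set $X$ is itself compact. The main obstacle is the net-convergence step: because a general locally convex Hausdorff space need not be metrizable, one cannot work with sequences and must carefully manage the two-parameter directed set of $(p,\varepsilon)$, ensuring both that the approximate fixed point estimate survives passage to subnets and that every seminorm $q\in\mathcal{P}$ is eventually controlled. The Schauder projection construction itself is the other delicate piece, since it hinges on convexity of $X$ to guarantee $\pi_{p,\varepsilon}(X)\subset X$; without convexity the Brouwer step could not be applied to $g_{p,\varepsilon}$.
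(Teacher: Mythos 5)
The paper does not prove this statement at all: it is quoted verbatim as Corollary 17.56 of \cite{guide2006infinite} and used as a black-box ingredient in the proof of Proposition \ref{prop:MFE_exist}, so there is no in-paper argument to compare against. Your proposal is the classical proof of the Schauder--Tychonoff theorem and is essentially correct: the Schauder projection $\pi_{p,\varepsilon}$ is well defined (the denominator is positive because the $p$-balls cover $X$), lands in $\mathrm{conv}(x_1,\dots,x_n)\subset X$ by convexity, satisfies $p(\pi_{p,\varepsilon}(x)-x)<\varepsilon$ since it is a convex combination of centers within $p$-distance $\varepsilon$ of $x$, and Brouwer applies on the finite-dimensional compact convex set $K$ because the subspace topology on a finite-dimensional subspace of a Hausdorff TVS is the Euclidean one. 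The one step you should make fully precise is the passage from approximate to exact fixed points: a single index $(p,\varepsilon)$ only controls the \emph{one} seminorm $p$, so to conclude $q(f(x^*)-x^*)=0$ for every $q$ you must index the net either by pairs (finite subset $F\subset\mathcal{P}$, $\varepsilon$) using the seminorm $\max_{p\in F}p$, or equivalently by the directed family of all continuous seminorms ordered by pointwise domination; then for each fixed $q$ the estimate $q(f(x_\alpha)-x_\alpha)\le \max_{p\in F_\alpha}p(f(x_\alpha)-x_\alpha)<\varepsilon_\alpha$ holds eventually, and cofinality of subnets preserves this. Your phrasing (``seminorm families larger'') suggests you intend exactly this, so I would only ask you to write the directed set out explicitly. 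The closedness argument for compactness of the fixed-point set is fine. In short: the paper buys the result by citation; your proof supplies the standard self-contained argument and would stand with that one clarification.
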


\noindent \textbf{Proof of Proposition \ref{prop:MFE_exist}.} Given the uniform boundedness of the reward function (Remark 1) and the continuity result from Proposition \ref{prop:unique}, the existence proof follows directly from Theorem 3 in \cite{light2022mean}, which applies the Schauder-Tychonoff Fixed Point Theorem. 

\bibliographystyle{informs2014}
\bibliography{MFG}

\end{document}